\newcommand{\krstyle}[1]{}
\newcommand{\nonkrstyle}[1]{#1}
	\newtheorem{theorem}{Theorem}
	\newtheorem{lemma}{Lemma}
	\newtheorem{definition}{Definition}
	\newtheorem{example}{Example}
	\newtheorem{proposition}{Proposition}
	\newtheorem{corollary}{Corollary}
	\author{%
	Jorge Fandinno$^1$\and
	Markus Hecher$^{1,2}$\\
	\affiliations
	$^1$University of Potsdam, Germany\\
	$^2$TU Wien, Austria\\
	\emails
	\{jorgefandinno, mhecher\}@gmail.com
	}
	\author[1]{Jorge Fandinno\thanks{jorgefandinno@gmail.com}}
	\author[1,2]{Markus Hecher\thanks{hecher@dbai.tuwien.ac.at}}
	\affil[1]{University of Potsdam, Germany}
	\affil[2]{TU Wien, Austria}
	\newtheorem{theorem}{Theorem}
	\newtheorem{lemma}{Lemma}
	\newtheorem{definition}{Definition}
	\newtheorem{example}{Example}
	\newtheorem{proposition}{Proposition}
	\newtheorem{corollary}{Corollary}
\newcommand{\tuplecolor}[1]{\textcolor{#1}}
\newcommand{\inputPredColor}{orange!55!red}
\newcommand{\outputPredColor}{blue!45!black}
\newcommand{\statePredColor}{green!62!black}
\DeclarePairedDelimiter\ceil{\lceil}{\rceil}
\tikzstyle{tdnode} = [draw,rounded corners,top color=vertexTopColor,bottom color=vertexBottomColor,minimum size=1.5em]
\tikzstyle{stdnode} = [tdnode, font=\scriptsize]
\tikzstyle{stdnodecompact} = [stdnode, inner sep = 1.5pt, outer sep = 0.1pt]
\tikzstyle{stdnodetable} = [stdnode, inner sep = 0.5pt, outer sep = 0]
\tikzstyle{stdnodenum} = [minimum size=1.5em, font=\scriptsize]
\tikzstyle{tdedge} = [-,draw,thick]
\tikzstyle{tdlabel} = [draw=none, rectangle, fill=none, inner sep=0pt, font=\scriptsize]
\colorlet{vertexTopColor}{white}
\colorlet{vertexBottomColor}{black!10}
\newcommand{\algorithmfootnote}[2][\footnotesize]{
  \let\old@algocf@finish\@algocf@finish
  \def\@algocf@finish{\old@algocf@finish
    \leavevmode\rlap{\begin{minipage}{\linewidth}
    #1#2
    \end{minipage}}
  }
}
\newenvironment{restatetheorem}[1][\unskip]{%
  \begingroup

}%
{%
  \addtocounter{theorem}{-1}
  \endgroup
}%
{%
  \addtocounter{lemma}{-1}
  \endgroup
}%
\DeclareMathOperator{\scc}{scc}
\newcommand{\futuresketch}[1]{}
\newcommand{\longversion}[1]{}
\newcommand{\shortversion}[1]{#1}
\DeclareMathOperator{\poly}{poly}
\newcommand{\eqdef}{\ensuremath{\,\mathrel{\mathop:}=}}
\newcommand{\SB}{\{}%
\newcommand{\SM}{\mid}%
\newcommand{\SE}{\}}%
\newcommand{\Card}[1]{\left|#1\right|}
\newcommand{\at}{\text{\normalfont at}}
\newcommand{\por}{\vee}
\newcommand{\hsep}{\leftarrow\,}
\newcommand{\algo}[1]{\ensuremath{\mathsf{#1}}}
\newcommand{\PRIM}{\ensuremath{{\algo{BndCyc}}}\xspace}
\DeclareMathOperator{\width}{width}
\DeclareMathOperator{\children}{children}
\DeclareMathOperator{\rootOf}{root}
\newcommand{\tab}[1]{\ensuremath{\tau_{#1}}}
\DeclareMathOperator{\type}{type}
\newcommand{\intr}{\textit{int}}
\newcommand{\leaf}{\textit{leaf}}
\newcommand{\rem}{\textit{forget}}
\newcommand{\join}{\textit{join}}
\newcommand{\tw}[1]{\mathit{tw}(#1)}
\newcommand{\TTT}{\ensuremath{\mathcal{T}}}%
\newcommand{\MAI}[2]{\ensuremath{#1^+_{#2}}}%
\newcommand{\MAR}[2]{\ensuremath{#1^-_{#2}}}%
\newcommand{\MARR}[2]{\ensuremath{#1^\sim_{#2}}}%
\DeclareMathOperator{\gatherproof}{\algo{proven}}
\DeclareMathOperator{\possord}{\algo{levelMaps}}
\DeclareMathOperator{\isminimal}{\algo{isMin}}
\newcommand{\bvali}[3]{\ensuremath{[\![#1]\!]_{#3}}}
\newcommand{\bval}[2]{\ensuremath{[#1]^{#2}}}
\newcommand{\SAT}{\textsc{SAT}\xspace}
\newcommand{\ASP}{\textsc{ASP}\xspace}
\newcommand{\NP}{\ensuremath{\textsc{NP}}\xspace}
\newcommand{\SIGMA}[2]{\ensuremath{\Sigma_{\textrm{#1}}^{\textrm{#2}}}}
\newcommand{\footnoteitext}[1]{\stepcounter{footnote}
  \footnotetext[\thefootnote]{#1}}
\begin{document}

\title{Treewidth-Aware Complexity in ASP:\\Not all Positive Cycles are Equally Hard}

\maketitle

\begin{abstract}
It is well-know that deciding consistency for normal answer set programs (ASP) is NP-complete, thus, as hard as the satisfaction problem for classical propositional logic (SAT).
The best algorithms to solve these problems take exponential time in the worst case.
The \emph{exponential time hypothesis} (ETH) implies that this result is tight for SAT, that is, SAT cannot be solved in subexponential time.
This immediately establishes that the result is also tight for the consistency problem for ASP.
However, accounting for the treewidth of the problem, the consistency problem for ASP is slightly harder than SAT: while SAT can be solved by an algorithm that runs in exponential time in the treewidth~$k$, it was recently shown that ASP requires exponential time in~$k\cdot\log(k)$.
This extra cost is due checking that there are no self-supported true atoms due to positive cycles in the program.
In this paper, we refine the above result and show that the consistency problem for ASP can be solved in exponential time in~$k\cdot\log(\lambda)$ where~$\lambda$ is the minimum between the treewidth and the size of the largest strongly-connected component in the positive dependency graph of the program.
We provide a dynamic programming algorithm that solves the problem and a treewidth-aware reduction from ASP to SAT that adhere to the above limit.
\end{abstract}

\section{Introduction}\label{sec:introduction}

Answer Set Programming (\ASP)~\cite{BrewkaEiterTruszczynski11,GebserKaminskiKaufmannSchaub12} is a 
problem modeling and solving paradigm well-known in the area 
of knowledge representation and reasoning that is experiencing
an increasing number of successful
applications~\cite{BalducciniGelfondNogueira06a,NiemelaSimonsSoininen99,NogueiraBalducciniGelfond01a,GuziolowskiEtAl13a,SchaubWoltran18}.
The flexibility of \ASP comes with a high computational complexity const:
its \emph{consistency problem}, that is, deciding the existence of
a solution (\emph{answer set}) for a given logic program is~$\Sigma_2^P$-complete~\cite{EiterGottlob95}, in general.
Fragments with lower complexity are also know.
For instance, the consistency problem for \emph{normal \ASP} or \emph{head-cycle-free (HCF) \ASP}, is \NP-complete.
Even for solving this class of programs, the best known algorithms require exponential time with respect to the size of the program.
Still, existing solvers~\cite{gekasc09c,alcadofuleperiveza17a} are able to find solutions for many interesting problems in reasonable time.
A way to shed light into this discrepancy is by means of \emph{parameterized
complexity}~\cite{CyganEtAl15}, which conducts more fine-grained
complexity analysis in terms
of parameters of a problem.
For ASP, several results were achieved in this direction~\cite{GottlobScarcelloSideri02,LoncTruszczynski03,LinZhao04,FichteSzeider15},
some insights involve even combinations~\cite{LacknerPfandler12,FichteKroneggerWoltran19} of parameters.
More recent studies focus on the influence of the parameter
\emph{treewidth} for solving \ASP~\cite{JaklPichlerWoltran09,FichteEtAl17a,FichteHecher19,BichlerMorakWoltran18,BliemEtAl20}.
These works directly make use of the treewidth of a given logic program
in order to solve, e.g., the consistency problem, in polynomial time in the program size,
while being exponential only in the treewidth.
%
%
Recently, it was shown that for normal ASP deciding consistency is
expected to be slightly superexponential for treewidth~\cite{Hecher20}.
More concretely, a lower bound was established saying that under reasonable 
assumptions such as the \emph{Exponential Time Hypothesis (ETH)}~\cite{ImpagliazzoPaturiZane01}, 
consistency for any normal logic program of treewidth~$k$ cannot be decided in 
time significantly better than~$2^{k\cdot\ceil{\log(k)}}\cdot\poly(n)$, 
where~$n$ is the number of variables (atoms) of the program. 
This result matches the known upper bound~\cite{FichteHecher19}
and shows that the consistency of normal \ASP is slightly harder than the satisfiability (\SAT) of a propositional formula, 
which under the ETH cannot be decided in time~$2^{o(k)}\cdot\poly(n)$.

We address this result and provide a more detailed analysis,
where besides treewidth, we also consider the size~$\ell$ of the largest strongly-connected components (SCCs)
of the \emph{positive dependency graph} as parameter.
This allows us to obtain runtimes below~$2^{k\cdot\ceil{\log(k)}}\cdot\poly(n)$ and show that that not all positive cycles of logic programs are equally hard. 
%
%
Then, we also provide a treewidth-aware reduction from head-cycle-free \ASP to the fragment of tight \ASP,
which prohibits cycles in the corresponding positive dependency graph.
This reduction reduces a given head-cycle-free program of treewidth~$k$ to a tight program of treewidth~$\mathcal{O}(k\cdot\log(\ell))$,
which improves known results~\cite{Hecher20}.
Finally, we establish that tight \ASP is as hard as \SAT in terms of treewidth.

\smallskip\noindent
\textbf{Contributions.}
More concretely, we present the following. 
\begin{enumerate}
	\item First, we establish a parameterized algorithm for deciding consistency of any head-cycle-free program~$\Pi$ that runs in time~$2^{\mathcal{O}(k\cdot\log(\ell))}\cdot\poly(\Card{\at(\Pi)})$, where~$k$ is the treewidth of~$\Pi$ and~$\ell$ is the size of the largest strongly-connected component (SCC) of the dependency graph of~$\Pi$.
	Combining this result with results from~\cite{{Hecher20}}, 
	consistency of any head-cycle-free program can be decided in~$2^{\mathcal{O}(k\cdot\log(\lambda))}\cdot\poly(\Card{\at(\Pi)})$
	where~$\lambda$ is the minimum of~$k$ and~$\ell$.
	Besides, our algorithm bijectively preserves answer sets with respect to the atoms of~$\Pi$ and can be therefore easily extended, see, e.g.~\cite{PichlerRuemmeleWoltran10}, 
	for counting and enumerating answer sets. 

	\item Then, we present a treewidth-aware reduction from head-cycle-free ASP to tight ASP.
	Our reduction takes any head-cycle-free  program~$\Pi$ and creates a tight  program,
whose treewidth is at most~$\mathcal{O}(k\cdot\log(\ell))$, where~$k$ is the treewidth of~$\Pi$ and~$\ell$ is the size of the largest SCC of the dependency graph of~$\Pi$.
	%
%
In general, the treewidth of the resulting tight program cannot be in~$o(k\cdot\log(k))$, unless ETH fails.
%
Our reduction forms a major improvement 
for the particular case where~$\ell \ll k$.
%
%
	\item Finally, we show a treewidth-aware reduction that takes any tight logic program~$\Pi$ and
	creates a propositional formula, whose treewidth is linear in the treewidth of the program.
	This reduction cannot be significantly improved under ETH. Our result also establishes that for deciding consistency of tight logic programs of bounded treewidth~$k$, one indeed obtains the same runtime as for \SAT, namley~$2^{\mathcal{O}(k)}\cdot\poly(\Card{\at(\Pi)})$, which is ETH-tight.
\end{enumerate}

\smallskip
\noindent\textbf{Related Work.}
While the largest SCC size has already been considered~\cite{Janhunen06},
it has not been studied in combination with treewidth. 
Also programs, where the number of even and/or odd cycles is bounded, have been analyzed~\cite{LinZhao04}, which is orthogonal to the size of the largest cycle or largest SCC size~$\ell$. Indeed, in the 
worst-case, each component might have an exponential number of cycles in~$\ell$.
%
%
%
%
Further, the literature distinguishes the so-called feedback width~\cite{GottlobScarcelloSideri02}, which involves the number of atoms required to break the positive cycles.
%
%
There are also related measures, called smallest backdoor size, 
where the removal of a backdoor, i.e., set of atoms, from the program results in 
normal or acyclic programs~\cite{FichteSzeider15,FichteSzeider17}.
%
%

\section{Background}\label{sec:background}
We assume familiarity with graph terminology, cf.,~\cite{Diestel12}.
Given a directed graph~$G=(V,E)$. Then, a set~$C\subseteq V$ of vertices of~$G$
is a \emph{strongly-connected component (SCC)} of~$G$ if
$C$ is a~$\subseteq$-largest set such that 
for every two distinct vertices~$u,v$ in~$C$ there is a directed path from~$u$ to~$v$ in~$G$.
A \emph{cycle} over some vertex $v$ of~$G$ is a directed path from~$v$ to~$v$.

\smallskip
\noindent\textbf{Answer Set Programming (ASP).} %
%
We assume familiarity with propositional satisfiability (\SAT)~\cite{BiereHeuleMaarenWalsh09,KleineBuningLettman99}, 
and follow standard definitions of propositional ASP~\cite{BrewkaEiterTruszczynski11,JanhunenNiemela16a}.
%
Let $m$, $n$, $o$ be non-negative integers such that
$m \leq n \leq o$, $a_1$, $\ldots$, $a_o$ be distinct propositional
atoms. Moreover, we refer by \emph{literal} to an atom or the negation
thereof.
%
A \emph{(logic) program}~$\Pi$ is a set of \emph{rules} of the form
%
\(
a_1\por \cdots \por a_m \hsep a_{m+1}, \ldots, a_{n}, \neg
a_{n+1}, \ldots, \neg a_o.
\)
%
%
%
%
%
%
%
%
%
%
For a rule~$r$, we let $H_r \eqdef \{a_1, \ldots, a_m\}$,
$B^+_r \eqdef \{a_{m+1}, \ldots, a_{n}\}$, and
$B^-_r \eqdef \{a_{n+1}, \ldots, a_o\}$.
%
%
%
We denote the sets of \emph{atoms} occurring in a rule~$r$ or in a
program~$\Pi$ by $\at(r) \eqdef H_r \cup B^+_r \cup B^-_r$ and
$\at(\Pi)\eqdef \bigcup_{r\in\Pi} \at(r)$.
For a set~$X\subseteq\at(\Pi)$ of atoms, we let~$\overline{X}\eqdef \{\neg x\mid x\in X\}$.
%
%
%
Program~$\Pi$ is \emph{normal}, if $\Card{H_r} \leq 1$ for
every~$r \in \Pi$.
The \emph{positive dependency digraph}~$D_\Pi$ of $\Pi$ is the
directed graph defined on the set of atoms
from~$\bigcup_{r\in \Pi}H_r \cup B^+_r$, where there is a directed edge from vertex~$a$ to vertex~$b$ iff there is a rule~$r \in \Pi$ with $a\in B^+_r$ and~$b\in H_r$.
A head-cycle of~$D_\Pi$ is an $\{a, b\}$-cycle\footnote{Let
  $G=(V,E)$ be a digraph and $W \subseteq V$. Then, a cycle in~$G$ is
  a $W$-cycle if it contains all vertices from~$W$.} for two distinct
atoms~$a$, $b \in H_r$ for some rule $r \in \Pi$. 
A program~$\Pi$ is 
\emph{head-cycle-free (HCF)} if $D_\Pi$ contains no
head-cycle~\cite{Ben-EliyahuDechter94} and~$\Pi$
is called \emph{tight} if~$D_\Pi$ contains no cycle at all~\cite{LinZhao03}.
The class of tight, normal, and HCF programs is referred to by
\emph{tight, normal, and HCF ASP}, respectively.
%
%

An \emph{interpretation} $I$ is a set of atoms. $I$ \emph{satisfies} a
rule~$r$ if $(H_r\,\cup\, B^-_r) \,\cap\, I \neq \emptyset$ or
$B^+_r \setminus I \neq \emptyset$.  $I$ is a \emph{model} of $\Pi$
if it satisfies all rules of~$\Pi$, in symbols $I \models \Pi$. 
For brevity, we view propositional formulas 
as sets of clauses that need to be satisfied, and
use the notion of interpretations, models, and satisfiability analogously. 
%
The \emph{Gelfond-Lifschitz
  (GL) reduct} of~$\Pi$ under~$I$ is the program~$\Pi^I$ obtained
from $\Pi$ by first removing all rules~$r$ with
$B^-_r\cap I\neq \emptyset$ and then removing all~$\neg z$ where
$z \in B^-_r$ from every remaining
rule~$r$~\cite{GelfondLifschitz91}. %
$I$ is an \emph{answer set} of a program~$\Pi$ if $I$ is a minimal
model of~$\Pi^I$. %
%
The problem of deciding whether an \ASP program has an answer set is called
\emph{consistency}, which is \SIGMA{2}{P}-complete~\cite{EiterGottlob95}. 
If the input is restricted to normal programs, the complexity drops to
\NP-complete~\cite{BidoitFroidevaux91,MarekTruszczynski91}.
A head-cycle-free program~$\Pi$ 
can be translated into a normal program in polynomial
time~\cite{Ben-EliyahuDechter94}.
%
%
%
%
%
%
%
%
%
%
%
%
%
%
%
The following characterization of answer sets is often
invoked when considering normal programs~\cite{LinZhao03}.
Given a set~$A\subseteq\at(\Pi)$ of atoms, a function~$\sigma: A \rightarrow \{0,\ldots,\Card{A}-1\}$ is called \emph{level mapping} over~$A$.
Given a model~$I$ of a normal program~$\Pi$ and a level mapping~$\sigma$ over~$I$, an atom~$a\in I$ is \emph{proven} 
if there is a rule~$r\in\Pi$ \emph{proving~$a$ with~$\sigma$}, where $a\in H_r$ with (i)~$B^+_r\subseteq I$,
(ii)~$I \cap B^-_r = \emptyset$ and
$I \cap (H_r \setminus \{a\}) = \emptyset$,
and (iii)~$\sigma(b) < \sigma(a)$ for every~$b\in B_r^+$. Then, $I$ is an
\emph{answer set} of~$\Pi$ if (i)~$I$ is a model of~$\Pi$, and
(ii) \emph{$I$ is proven}, i.e., every~$a \in I$ is proven.
This characterization vacuously extends to head-cycle-free
programs~\cite{Ben-EliyahuDechter94} and
allows for further simplification when considering SCCs of~$D_\Pi$~\cite{Janhunen06}.
To this end, we denote for each atom~$a\in\at(\Pi)$ the \emph{strongly-connected component (SCC)} of atom~$a$ in~$D_\Pi$ by~$\scc(a)$.
Then, Condition (iii) above can be relaxed to~$\sigma(b) < \sigma(a)$
for every~$b\in B_r^+\cap C$, where~$C=\scc(a)$ is the SCC of~$a$.

\begin{figure}[t]%
  \centering
  \shortversion{ %
    \begin{tikzpicture}[node distance=7mm,every node/.style={fill,circle,inner sep=2pt}]
\node (a) [label={[text height=1.5ex,yshift=-0.075cm,xshift=0.00cm]above:$b$}] {};
\node (b) [right of=a,label={[text height=1.5ex,yshift=0.09cm,xshift=-0.07cm]right:$a$}] {};
\node (f) [left of=a,label={[text height=.85ex,xshift=0.25em]left:$e$}] {};
\node (e) [below of=a, label={[text height=1.5ex,xshift=-.0em,yshift=.08cm]below:$c$}] {};
\node (d) [below of=b,label={[text height=1.5ex,yshift=0.09cm,xshift=-0.07cm]right:$d$}] {};
\draw [<-,thick] (a) to (b);
\draw [<-,thick] (b) to (d);
\draw [<-,thick] (d) to (e);
\draw [<-,thick] (e) to (a);
\draw [<->,thick] (d) to (a);
\draw [<-,thick] (a) to (f);
\end{tikzpicture}%
    \vspace{-.4em}
    \caption{Positive dependency graph~$D_\Pi$ of program~$\Pi$ of Example~\ref{ex:running1}.}
  }%
  \label{fig:depgraph}
\end{figure}
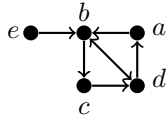

\begin{example}
\label{ex:running1}\label{ex:running}
Consider the following program
\vspace{-0.1em}
$\Pi\eqdef$
%
$\SB\overbrace{a \hsep d}^{r_1};\, %
\overbrace{b \hsep a}^{r_2};\, %
\overbrace{b \hsep d}^{r_3};$ 
$\overbrace{b \hsep e, \neg f}^{r_4};\, %
\overbrace{c \hsep b}^{r_5};\, %
\overbrace{d \hsep b, c}^{r_6};\, %
\overbrace{e \lor f \lor g \hsep}^{r_7} %
\SE$.
%
%
Observe that $\Pi$ is head-cycle-free.
Figure~\ref{fig:depgraph} shows the positive dependency graph~$D_\Pi$
consisting of SCCs~$\scc(e)$ and~$\scc(a)=\scc(b)=\scc(c)=\scc(d)$.
Then, $I\eqdef\{a,b, c, d,e\}$ is an answer set of~$\Pi$,
since~$I\models\Pi$, and we can prove with level mapping
${\sigma \eqdef\{e\mapsto 0}, {f\mapsto 0},\allowbreak {g\mapsto 0},{ b\mapsto 0}, {c\mapsto 1}, {d\mapsto 2}, {a\mapsto 3\}}$
atom~$e$ by rule~$r_7$, 
atom~$b$ by rule~$r_4$,
atom~$c$ by rule~$r_5$, and
atom~$d$ by rule~$r_6$.
Further answer sets are $\{f\}$ and
$\{g\}$.

\end{example}%

\smallskip
\noindent\textbf{Tree Decompositions (TDs).} %
%
%
%
A \emph{tree decomposition (TD)}~\cite{RobertsonSeymour86} 
of a given graph~$G{=}(V,E)$ is a pair
$\TTT{=}(T,\chi)$ where $T$ is a tree rooted at~$\rootOf(T)$ and $\chi$ 
assigns to each node $t$ of~$T$ a set~$\chi(t)\subseteq V$,
called \emph{bag}, such that (i) $V=\bigcup_{t\text{ of }T}\chi(t)$, (ii)
$E\subseteq\SB \{u,v\} \SM t\text{ in } T, \{u,v\}\subseteq \chi(t)\SE$,
and (iii) ``connectedness'': for each $r, s, t\text{ of } T$, such that $s$ lies on the path
from~$r$ to $t$, we have $\chi(r) \cap \chi(t) \subseteq \chi(s)$.
For every node~$t$ of~$T$, we denote by $\children(t)$ the \emph{set of child nodes of~$t$} in~$T$.
The \emph{bags~$\chi_{\leq t}$ below~$t$} consists of the union of all bags of nodes below~$t$ in~$T$, including~$t$. 
We
let $\width(\TTT) {\eqdef} \max_{t\text{ of } T}\Card{\chi(t)}-1$.
The
\emph{treewidth} $\tw{G}$ of $G$ is the minimum $\width({\TTT})$ over
all TDs $\TTT$ of $G$. TDs can be 5-approximated in \emph{single exponential time}~\cite{BodlaenderEtAl13} in the treewidth. 
For a node~$t \text{ of } T$, we say that $\type(t)$ is $\leaf$ if $t$ has
no children and~$\chi(t)=\emptyset$; $\join$ if $t$ has children~$t'$ and $t''$ with
$t'\neq t''$ and $\chi(t) = \chi(t') = \chi(t'')$; $\intr$
(``introduce'') if $t$ has a single child~$t'$,
$\chi(t') \subseteq \chi(t)$ and $|\chi(t)| = |\chi(t')| + 1$; $\rem$
if $t$ has a single child~$t'$,
$\chi(t') \supseteq \chi(t)$ and $|\chi(t')| = |\chi(t)| + 1$. If for
every node $t\text{ of } T$, %
$\type(t) \in \{ \leaf, \join, \intr, \rem\}$, the TD is called \emph{nice}.
%
A TD can be turned into a nice TD~\cite{Kloks94a}[Lem.\ 13.1.3] \emph{without increasing the width} in linear~time.
%

\begin{example}\label{ex:td}
  Figure~\ref{fig:graph-td} illustrates a graph~$G$ and a TD~$\mathcal{T}$ of~$G$ of width~$2$, which is also the treewidth of~$G$,
  since~$G$ contains~\cite{Kloks94a} a completely connected graph among vertices $b$,$c$,$d$.
\end{example}

In order to use TDs for \ASP, we need
dedicated graph representations of programs~\cite{JaklPichlerWoltran09}.
The \emph{primal graph\footnote{Analogously, the primal graph~$G_F$ of a propositional Formula~$F$ %
uses variables of~$F$ as vertices and adjoins two vertices~$a,b$ by an edge, if there is a formula in~$F$ %
containing~$a,b$.%
}}~$G_\Pi$
of program~$\Pi$ has the atoms of~$\Pi$ as vertices and an
edge~$\{a,b\}$ if there exists a rule~$r \in \Pi$ and $a,b \in \at(r)$.
\longversion{The \emph{incidence graph}~$I_\Pi$ of $\Pi$ is the bipartite
graph that has the atoms and rules of~$\Pi$ as vertices and an
edge~$a\, r$ if $a \in \at(r)$ for some rule~$r \in \Pi$.}
\noindent Let ${\cal T} = (T, \chi)$ be a TD of primal
graph~$G_\Pi$ of a program $\Pi$, and let~$t$ be a node of~$T$. 
%
The \emph{bag program}~$\Pi_t$ contains rules entirely covered by
the bag~$\chi(t)$. Formally, $\Pi_t \eqdef \SB r \SM r \in \Pi, \at(r) \subseteq \chi(t)\}$.
\begin{example}
  Recall program~$\Pi$ from Example~\ref{ex:running1} and observe
  that graph~$G$ of Figure~\ref{fig:graph-td} is the primal graph
  of~$\Pi$.
  Further, we have $\Pi_{t_1}=\{r_1,r_2,r_3\}$, $\Pi_{t_2}=\{r_3,r_5,r_6\}$,
  $\Pi_{t_3}=\emptyset$, $\Pi_{t_4}=\{r_7\}$ and~$\Pi_{t_5}=\{r_4\}$. 
  \longversion{%
    Further, graph~$G_2$ of Figure~\ref{fig:graph-td2} is the
    incidence graph of~$\Pi$.
  }
\end{example}

%

\begin{figure}[t]%
  \centering
  \shortversion{ %
    \begin{tikzpicture}[node distance=7mm,every node/.style={fill,circle,inner sep=2pt}]
\node (a) [label={[text height=1.5ex,yshift=-0.075cm,xshift=0.00cm]above:$b$}] {};
\node (b) [right of=a,label={[text height=1.5ex,yshift=0.09cm,xshift=-0.07cm]right:$a$}] {};
\node (f) [left of=a,label={[text height=.85ex,xshift=0.25em]left:$e$}] {};
\node (e) [below of=a, label={[text height=1.5ex,xshift=-.0em,yshift=.08cm]below:$c$}] {};
\node (d) [below of=b,label={[text height=1.5ex,yshift=0.09cm,xshift=-0.07cm]right:$d$}] {};
\node (c) [left of=e,label={[text height=1.5ex,yshift=0.09cm,xshift=0.16cm]left:$f$}] {};
\draw (a) to (c);
\draw (a) to (b);
\draw (b) to (d);
\draw (d) to (e);
\draw (e) to (a);
\draw (d) to (a);
\draw (c) to (f);
\draw (a) to (f);
\end{tikzpicture}%
    \includegraphics{graph0/td}
    \vspace{-.4em}
    \caption{Graph~$G$ (left) and a tree decomposition~$\mathcal{T}$ of~$G$ (right).}
  }%
  \longversion{%
    \begin{subfigure}[c]{0.47\textwidth}
      \centering%
      \begin{tikzpicture}[node distance=7mm,every node/.style={fill,circle,inner sep=2pt}]
\node (a) [label={[text height=1.5ex,yshift=-0.075cm,xshift=0.00cm]above:$b$}] {};
\node (b) [right of=a,label={[text height=1.5ex,yshift=0.09cm,xshift=-0.07cm]right:$a$}] {};
\node (f) [left of=a,label={[text height=.85ex,xshift=0.25em]left:$e$}] {};
\node (e) [below of=a, label={[text height=1.5ex,xshift=-.0em,yshift=.08cm]below:$c$}] {};
\node (d) [below of=b,label={[text height=1.5ex,yshift=0.09cm,xshift=-0.07cm]right:$d$}] {};
\node (c) [left of=e,label={[text height=1.5ex,yshift=0.09cm,xshift=0.16cm]left:$f$}] {};
\draw (a) to (c);
\draw (a) to (b);
\draw (b) to (d);
\draw (d) to (e);
\draw (e) to (a);
\draw (d) to (a);
\draw (c) to (f);
\draw (a) to (f);
\end{tikzpicture}%
      \input{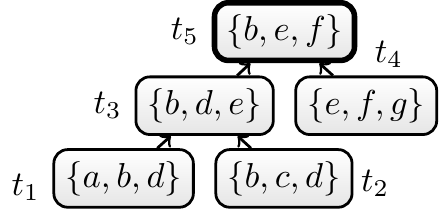}%
      \caption{Graph~$G_1$ and a tree decomposition of~$G_1$.}
      \label{fig:graph-td}
    \end{subfigure}
    \begin{subfigure}[c]{0.5\textwidth}
      \centering \input{graph0/graph_inc}%
      \input{graph0/td_inc}%
      \caption{Graph~$G_2$ and a tree decomposition of~$G_2$.}
      \label{fig:graph-td2}%
    \end{subfqigure}
    \caption{Graphs~$G_1, G_2$ and two corresponding tree
      decompositions.}
  }%
  \label{fig:graph-td}%
\end{figure}

\section{Bounding Treewidth and Positive Cycles}\label{sec:approach}

Recently, it was shown that under reasonable assumptions, namely
the exponential time hypothesis (ETH), deciding consistency of normal logic programs is slightly superexponential
and one cannot expect to significantly improve in the worst case.
For a given normal logic program, where~$k$ is the treewidth of the primal graph of the program,
this implies that one cannot decide consistency in time significantly better than~$2^{k\cdot\ceil{\log(k)}}\cdot\poly(\Card{\at(\Pi)})$. 
%
%
\begin{proposition}[Lower Bound for Treewidth~\cite{Hecher20}]\label{prop:lowerbound}
Given a normal or head-cycle-free logic program~$\Pi$, where~$k$ is the treewidth of the primal graph of~$\Pi$.
Then, under ETH one cannot decide consistency of~$\Pi$ in time~$2^{o(k\cdot\log(k))}\cdot\poly(\Card{\at(\Pi)})$.
\end{proposition}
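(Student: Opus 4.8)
The plan is to establish this lower bound via a reduction from a problem whose slightly-superexponential hardness under ETH is already known, namely \SAT (or rather a suitable structured variant) parameterized by treewidth. The natural source is the result that, under ETH, one cannot decide satisfiability of a CNF formula in time $2^{o(k \cdot \log(k))} \cdot \poly(n)$ for problems that genuinely require a logarithmic factor — the canonical candidate being a counting-style or ordering-style problem such as the hardness constructions built on top of $k \times k$ grid tilings or on the \textsc{$k$-Clique}/\textsc{Subgraph Isomorphism} lower bounds of Lokshtanov, Marx, and Saurabh. Concretely, I would reduce from an instance of a problem that, under ETH, has no $2^{o(k\cdot\log k)}\cdot\poly$ algorithm when parameterized by a width measure $k$, and show that the produced normal (or head-cycle-free) logic program $\Pi$ has treewidth $\mathcal{O}(k)$ while faithfully encoding the source instance as a consistency question.

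The key steps, in order, are as follows. First, I would fix the source hardness result and recall its precise statement, so that the target running time $2^{o(k\cdot\log k)}\cdot\poly$ lines up with what the source forbids. Second, I would design the encoding: the answer-set semantics gives us, essentially for free, the ability to express a \emph{guess} (through the choice of which atoms are in the interpretation, e.g.\ via disjunctive or choice-like rules such as $r_7$ in Example~\ref{ex:running1}) together with a \emph{minimality/foundedness check} driven by the level-mapping characterization recalled in the background. The point of using positive cycles is that the extra $\log(k)$ factor in the lower bound comes precisely from the cost of certifying an ordering (a level mapping $\sigma$) on the atoms participating in a cycle; this is where normal ASP is provably harder than \SAT. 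So the encoding must force the certification of an ordering over roughly $k$ elements inside the bags, which is exactly what raises the effective state per bag from $2^{\Theta(k)}$ to $2^{\Theta(k\log k)}$. Third, I would verify the two correctness obligations: (a) $\Pi$ is consistent iff the source instance is a yes-instance, and (b) the treewidth of the primal graph $G_\Pi$ is $\mathcal{O}(k)$, so that any hypothetical $2^{o(k\cdot\log k)}\cdot\poly(\Card{\at(\Pi)})$ algorithm for consistency would transfer back to a $2^{o(k\cdot\log k)}\cdot\poly$ algorithm for the source, contradicting ETH.

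The main obstacle will be the simultaneous control of the two competing quantities in step three: the reduction must embed enough combinatorial structure that the guessed ordering genuinely needs $\Theta(\log k)$ bits per relevant atom, yet it must keep the treewidth linear in $k$ rather than blowing it up. A loose encoding that introduces auxiliary atoms carelessly will inflate the bag size and destroy the bound; the subtlety is to lay out the cycle-and-ordering gadgetry so that all the atoms that must appear together in a bag (those jointly constrained by any single rule, per the primal-graph edge condition $a,b\in\at(r)$) stay confined to $\mathcal{O}(k)$-sized bags. I expect the cleanest route is to take the construction already present in the cited lower bound~\cite{Hecher20} essentially verbatim — the statement here is attributed to that paper — and restate it, so that the real work is bookkeeping: confirming that the head-cycle-free (indeed normal) property is preserved by the gadget and that the tree decomposition of the source instance lifts, node by node, to a decomposition of $G_\Pi$ with only a constant-factor increase in width.
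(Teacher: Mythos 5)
This proposition is not proved in the paper at all: it is imported from \cite{Hecher20}, as the citation in its title indicates, so there is no internal proof to compare your attempt against. Your plan is consistent with how \cite{Hecher20} actually establishes the bound---a treewidth-preserving reduction from a slightly-superexponential problem in the sense of Lokshtanov, Marx, and Saurabh, with positive-cycle/level-mapping gadgets forcing the certification of an ordering over roughly $k$ elements---and your sketch ultimately defers to that construction verbatim, which is exactly what the paper itself does by citation. One correction: your opening candidate source, \SAT parameterized by treewidth, cannot work, since \SAT is decidable in time $2^{\mathcal{O}(k)}\cdot\poly(n)$; only the ordering-style problems you name afterwards (e.g., the $k\times k$ permutation-clique-type problems of Lokshtanov, Marx, and Saurabh) carry the required $2^{o(k\cdot\log(k))}$ ETH lower bound.
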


While according to Proposition~\ref{prop:lowerbound}, we cannot expect to significantly improve the runtime for normal logic programs
in the worst case, it still is worth to study the underlying reason that makes the worst case so bad.
It is well-known that positive cycles are responsible for the hardness~\cite{LifschitzRazborov06,Janhunen06} of computing
answer sets of normal logic programs.
The particular issue with logic programs~$\Pi$ in combination with treewidth and large cycles is that in a tree decomposition of~$G_\Pi$
it might be the case that the cycle spreads across the whole decomposition,
i.e., tree decomposition bags only contain parts of such cycles, which prohibits to view these cycles (and dependencies) as a whole.
This is also the reason of the hardness given in Proposition~\ref{prop:lowerbound} and explains why under bounded treewidth 
evaluating normal logic programs is harder than evaluating proposition formulas.
However, if a given normal logic program only has positive cycles of lengths at most~$3$,
and each atom appears in at most one positive cycle, the properties of tree decompositions
already ensure that the atoms of each such positive cycle appear in at least one common bag.
Indeed, a cycle of length at most~$3$ forms a completely connected subgraph
and therefore it is guaranteed~\cite{Kloks94a} that the atoms of the cycle are in one common bag of any tree decomposition of~$G_\Pi$.
\begin{example}
Recall program~$\Pi$ of Example~\ref{ex:running1}.
Observe that in any TD of~$G_\Pi$ it is required that there are nodes~$t,t'$ with~$\chi(t)\subseteq \{b,c,d\}$ and~$\chi(t')\subseteq \{a,b,d\}$
since a cycle of length~$3$ in the positive dependency graph~$D_\Pi$ (cf., Figure~\ref{fig:depgraph}) forms a completely connected graph in the primal graph, cf., Figure~\ref{fig:graph-td} (left). 
\end{example}

In the following, we generalize this result to cycles of length at most~$\ell$, where we bound the size of these positive cycles 
in order to improve the lower bound of Proposition~\ref{prop:lowerbound}
on programs of bounded positive cycle lengths.
This will provide not only a significant improvement in the running time on programs, where the size of positive cycles is bounded,
but also shows that indeed the case of positive cycle lengths up to~$3$ can be generalized to lengths beyond~$3$.
Consequently, we establish that not all positive cycles are bad assuming that the maximum size~$\ell$ of the positive cycles is bounded,
which provides an improvement of Proposition~\ref{prop:lowerbound} as long as~$\ell \ll k$, where~$k$ is the treewidth of~$G_\Pi$.

\subsubsection*{Bounding positive Cycles.}
In the remainder, we assume a HCF logic program~$\Pi$, whose treewidth is given by~$k=\tw{G_\Pi}$. 
We let~$\ell_{\scc(a)}$ for each atom~$a$ be the \emph{number of atoms (size) of the SCC} of~$a$ in~$D_\Pi$.
Further, we let~$\ell\eqdef \max_{a\in\at(\Pi)}\ell_{\scc(a)}{+1}$ be the \emph{largest SCC size}.
This also bounds the lengths of positive cycles. If each atom~$a$ appears in at most one positive cycle, we have that~$\ell_{\scc(a)}$ 
is the cycle length of~$a$ and then~$\ell{-}1$ is the length of the largest cycle in~$\Pi$.
We refer to the class of HCF logic programs, whose largest SCC size is bounded by a parameter~$\ell$
by \emph{SCC-bounded \ASP}.
Observe that the largest SCC size~$\ell$ is orthogonal to the measure treewidth.
\begin{example}
Consider program~$\Pi$ from Example~\ref{ex:running1}. Then,~$\ell_{\scc(e)}=1$, $\ell_{\scc(a)}=\ell_{\scc(b)}=\ell_{\scc(c)}=\ell_{\scc(d)}=4$, and~$\ell=5$.

Now, assume a program, whose primal graph equals the dependency graph, which is just one large (positive) cycle. It is easy to see that this program has treewidth~$2$ and
one can define a TD of~$G_\Pi$, whose bags are constructed along the cycle.
However, the largest SCC size coincides with the number of atoms. 
Conversely, there are instances of large treewidth without any positive cycle. 
\end{example}

Bounding cycle lengths or sizes of SCCs seems similar to the non-parameterized context, 
where the consistency of normal logic programs is compiled
to a propositional formula (\SAT) by a reduction based on level mappings that is applied on a
SCC-by-SCC basis~\cite{Janhunen06}.
However, this reduction does not preserve the treewidth.
On the other hand, while our approach also uses level mappings
and proceeds on an SCC-by-SCC basis, the overall evaluation
is not SCC-based, since this might completely destroy the treewidth in the worst-case.
Instead, the evaluation is still guided along a tree decomposition, which is presented
in two flavors. First, we show a dedicated parameterized algorithm for the evaluation of logic programs
of bounded treewidth, followed by a treewidth-aware reduction to propositional satisfiability.

\subsection{An Algorithm for SCC-bounded \ASP and Treewidth}

In the course of this section, we establish the following theorem.

\begin{theorem}[Runtime of SCC-bounded \ASP]\label{theo:runtimescc}
Assume a HCF logic program~$\Pi$,
where the treewidth of the primal graph~$G_\Pi$ of~$\Pi$ is at most~$k$
and~$\ell$ is the largest SCC size.
Then, there is an algorithm for deciding the consistency of~$\Pi$,
running in time~$2^{\mathcal{O}(k\cdot \log(\lambda))} \cdot \poly(\Card{\at(\Pi)})$,
where~$\lambda=\min(\{k,\ell\})$. 
\end{theorem}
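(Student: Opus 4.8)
The plan is to build a dynamic-programming (DP) algorithm that traverses a nice tree decomposition of the primal graph $G_\Pi$ and, instead of guessing one global ordering of all bag atoms (which costs $2^{\mathcal{O}(k\cdot\log k)}$), only guesses orderings \emph{within each SCC}. First I would fix a nice TD $\mathcal{T}=(T,\chi)$ of $G_\Pi$ of width at most $\mathcal{O}(k)$, obtained by $5$-approximating the treewidth in single-exponential time and making it nice without increasing the width. Then I would rely on the level-mapping characterization of answer sets stated above, in its SCC-relaxed form: $I$ is an answer set of the HCF program $\Pi$ iff $I\models\Pi$ and there is a level mapping $\sigma$ such that every $a\in I$ is proven by some rule $r$, where condition (iii) is only required for $b\in B_r^+\cap\scc(a)$. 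The crucial observation I would exploit is that every proving constraint $\sigma(b)<\sigma(a)$ relates two atoms $a,b\in\at(r)$ of a common rule; since $\at(r)$ is a clique of $G_\Pi$, these atoms share a bag, so each such constraint can be verified locally, and orderings across different SCCs are irrelevant.

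Next I would specify the DP table at a node $t$ as a set of rows, each recording (a) the guessed interpretation restricted to the bag, $I\cap\chi(t)$; (b) for every SCC $C$, a total order of the piece $\chi(t)\cap C\cap I$, representing $\sigma$ restricted to the bag; and (c) the subset $P\subseteq I\cap\chi(t)$ of atoms already proven by a rule entirely below $t$ that respects the guessed order. The transitions follow the standard four node types. At a $\leaf$ the only row is empty. At an $\intr$ node introducing $a$, I would branch on $a\in I$ or not, insert $a$ into the order of $\scc(a)$, check all bag rules that just became fully covered (for model-ness), and mark newly provable atoms in $P$. At a $\rem$ node forgetting $a$, I would keep only rows where $a\in I$ implies $a\in P$ (an atom may leave only once it is proven), then project $a$ out of $I$, the order, and $P$. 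At a $\join$ node I would combine rows with identical interpretation and identical SCC-orders, setting the proven set to the union of the two children's proven sets. Acceptance is decided by the existence of any row at $\rootOf(T)$ (whose bag is empty), i.e., a globally consistent guess in which every atom was eventually proven.

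The runtime bound is where the SCC restriction pays off. In a bag of size at most $k{+}1$, the ordered pieces $\chi(t)\cap C\cap I$ partition at most $k{+}1$ atoms, and each piece has size at most $\min(\ell,k{+}1)=\mathcal{O}(\lambda)$. Hence the number of order-guesses per bag is $\prod_C (\Card{\chi(t)\cap C\cap I})!\le \lambda^{k+1}=2^{\mathcal{O}(k\cdot\log\lambda)}$, where I bound each factorial by $\lambda^{\Card{\chi(t)\cap C\cap I}}$ and use that the piece sizes sum to at most $k{+}1$. Multiplying by the $2^{\mathcal{O}(k)}$ choices for $I$ and $P$ and the polynomially many nodes of $\mathcal{T}$, each processed in time polynomial in the table size, yields the claimed $2^{\mathcal{O}(k\cdot\log\lambda)}\cdot\poly(\Card{\at(\Pi)})$; note that for tight programs $\lambda=1$ the order component collapses and the bound degrades gracefully to $2^{\mathcal{O}(k)}$.

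The main obstacle I expect is the correctness proof, specifically showing that the per-bag SCC-local orders are exactly as informative as a single global level mapping. Soundness (an accepting run yields an answer set) requires gluing the local orders into one global $\sigma$: using TD connectedness, the orders of each SCC $C$ agree on overlapping bags, so their union is a partial order on $C$ that extends to a level mapping, and every proven atom has a witnessing rule satisfying (iii) relative to $\scc(a)$. Completeness (an answer set induces an accepting run) requires that restricting a witnessing $\sigma$ to the bags and recording the proven atoms survives every transition, which hinges on proving being verifiable at the bag that first covers the relevant rule. I would argue both directions by induction along $T$ on the invariant that the rows at $t$ are precisely the restrictions to $\chi(t)$ of model-and-order guesses whose constraints among the atoms in $\chi_{\le t}$ are satisfiable and all of whose forgotten atoms are proven; establishing this invariant for the $\join$ step — where the proven sets must combine while respecting the shared order — is the technically most delicate part.
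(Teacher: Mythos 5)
Your proposal is correct and, at the level of the DP skeleton, matches the paper's: a nice $5$-approximate TD, rows consisting of a bag interpretation, a proven set, and ordering information, forget nodes that let an atom leave only once it is proven, and joins that require agreement on the interpretation and the ordering data. The genuine difference is how the ordering data is represented. The paper's algorithm \PRIM stores an explicit numeric level $\sigma(x)\in\{0,\ldots,\ell_{\scc(x)}-1\}$ for each bag atom --- the \emph{same} value in every bag --- so cross-bag consistency is built in and soundness of the invariant is immediate; the price is a table count of roughly $\ell^{k}$, which is why the paper's proof performs a case dispatch: if $\ell>k$ it falls back to the known $2^{\mathcal{O}(k\log k)}$ algorithm of Fichte and Hecher, and it runs \PRIM only when $\ell\le k$. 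Your per-SCC \emph{relative orders} avoid that dispatch entirely: a piece $\chi(t)\cap C\cap I$ can exceed neither the bag size nor the SCC size, so a single algorithm yields $2^{\mathcal{O}(k\cdot\log(\lambda))}$ with $\lambda=\min(\{k,\ell\})$ --- a small structural gain over the paper. What you pay is exactly the step you flag as delicate, and there your justification is too quick: ``the orders agree on overlapping bags, so their union is a partial order'' does not follow from agreement alone. You must exclude a constraint cycle $x_1\prec x_2\prec\cdots\prec x_n\prec x_1$ whose individual constraints are witnessed in pairwise \emph{different} bags. A correct argument runs as follows: take a shortest such cycle; the case $n=2$ is the only one where connectedness plus stepwise agreement suffices; for $n\ge 3$, if two non-consecutive atoms $x_i,x_j$ share a bag they are comparable there, and either orientation shortens the cycle, contradicting minimality; otherwise the subtrees of TD nodes whose bags contain $x_1,\ldots,x_n$ have the property that consecutive ones intersect while non-consecutive ones do not, so their intersection graph contains a chordless cycle of length $n\ge 4$, contradicting the chordality of intersection graphs of subtrees of a tree (for $n=3$, the Helly property gives a single bag containing all three atoms, contradicting that its total order is acyclic). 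With this gluing lemma supplied, your soundness direction --- and hence the whole proof --- goes through; the paper's ``same level in every bag'' device exists precisely to avoid needing it, and additionally gives the exact one-to-one correspondence of answer sets that the paper later exploits for counting and enumeration, which bag-local relative orders do not provide as directly.
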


The overall idea of the algorithm relies on so-called dynamic programming,
which be briefly recap next.

\subsubsection*{Dynamic Programming on Tree Decompositions.}
\emph{Dynamic programming (DP)} on TDs, see, e.g.,~\cite{BodlaenderKoster08}, 
evaluates a given input instance~$\mathcal{I}$ in parts along a given TD of a graph representation~$G$ of the instance.
Thereby, for each node~$t$ of the TD, intermediate results are 
stored in a \emph{table}~$\tab{t}$. %
This is achieved by running a \emph{table algorithm}, 
which is designed for a certain graph representation, 
and stores in~$\tab{t}$ results of problem parts of~$\mathcal{I}$,
thereby considering tables~$\tab{t'}$ for child nodes~$t'$ of~$t$. 
DP works for \emph{many problems} as follows. 
\vspace{-.2em}\begin{enumerate}
\item Construct a \emph{graph representation}~$G$ of~$\mathcal{I}$.\vspace{-.2em}
\item Compute a TD~$\TTT=(T,\chi)$ of~$G$. For simplicity and better presentation of the different cases within our table algorithms, we use \emph{nice} TDs for DP.
\vspace{-.2em}
\item\label{step:dp} Traverse the nodes of~$T$ in
  post-order (bottom-up tree traversal of~$T$).
  At every node~$t$ of $T$ during post-order traversal, execute a table algorithm
  that takes as input a bag $\chi(t)$, a certain \emph{bag instance}~$\mathcal{I}_t$ depending on the problem, as well as previously computed child tables of~$t$. Then, the results of this execution is stored in table~$\tab{t}$.\vspace{-.2em}
\item Finally, interpret table~$\tab{n}$ for the root node~$n$ of~$T$ in order to \emph{output the solution} to the problem for instance~$\mathcal{I}$.
\end{enumerate}

%

Now, the missing ingredient for solving problems via dynamic programming along a given TD, is a suitable table algorithm.
Such algorithms have been already presented for~\SAT~\cite{SamerSzeider10b} and \ASP~\cite{JaklPichlerWoltran09,FichteEtAl17a,FichteHecher19}. 
We only briefly sketch the ideas of a table algorithm using the primal graph that computes models 
of a given program~$\Pi$. 
%
Each table~$\tab{t}$ consist of rows storing interpretations
over atoms in the bag~$\chi(t)$.
Then, the table~$\tab{t}$ for leaf nodes~$t$ 
consist of the empty interpretation.
For nodes~$t$ with introduced variable $a\in\chi(t)$, 
we store in~$\tab{t}$ 
interpretations of the child table, but 
for each such interpretation 
we decide whether~$a$ is in the interpretation or not, 
and ensure that the interpretation satisfies $\Pi_t$.
When an atom~$b$ is forgotten in a forget node~$t$, we store interpretations of the child table, but restricted to atoms in~$\chi(t)$.
By the properties of a TD, it is then guaranteed that all rules containing~$b$ 
have been processed so far.
For join nodes, we store in~$\tab{t}$ interpretations 
that are also in both child tables of~$t$.

%

\renewcommand{\eqdef}{\leftarrow}
%
%
 \begin{algorithm}[t]
   \KwData{Node~$t$, bag $\chi(t)$, bag program~$\Pi_t$,
     sequence $\langle \tab{1}, \ldots,\tau_o \rangle$ of child tables
     of~$t$.}
  \KwResult{Table~$\tab{t}$.}
   \lIf(\hspace{-1em})
   {$\type(t) = \leaf$}{
     $\tab{t} \eqdef \{ \langle
     \tuplecolor{\inputPredColor}{\emptyset}, \tuplecolor{\outputPredColor}{\emptyset}, \tuplecolor{\statePredColor}{\emptyset}
     \rangle \}$\label{line:primleaf}%
     %
   }%
  \uElseIf{$\type(t) = \intr$ and $a\hspace{-0.1em}\in\hspace{-0.1em}\chi(t)$ is the introduced atom}{
   \vspace{-0.05em}
   \makebox[2.8cm][l]{\hspace{-1em}$\tab{t} \eqdef 
   \{ \langle \tuplecolor{\inputPredColor}{I'}, \tuplecolor{\outputPredColor}{\mathcal{P}'}, \tuplecolor{\statePredColor}{\sigma'} \rangle$}$\mid\langle \tuplecolor{\inputPredColor}{I}, \tuplecolor{\outputPredColor}{\mathcal{P}}, \tuplecolor{\statePredColor}{\sigma} \rangle\in \tab{1},$ $I' \in \{I, \MAI{I}{a}\}, I' \models \Pi_t, $\label{line:primintr1} 

     \makebox[1.4cm][l]{} 
      $ \sigma' \in \possord(\sigma, \{a\} \cap I'), \isminimal(\sigma', \Pi_t), \mathcal{P}'=\mathcal{P} \cup \gatherproof(I', \sigma', \Pi_t)\}
      \hspace{-5em}$\label{line:primintr2}
   \vspace{-0.05em}
     }\vspace{-0.05em}%
     \uElseIf{$\type(t) = \rem$ and $a \not\in \chi(t)$ is the forgotten atom}{
       \makebox[2.8cm][l]{\hspace{-1em}$\tab{t} \eqdef 
       \{ \langle \tuplecolor{\inputPredColor}{\MAR{I}{a}}, \tuplecolor{\outputPredColor}{\MAR{\mathcal{P}}{a}}, \tuplecolor{\statePredColor}{\MARR{\sigma}{a}}
       \rangle$}$|\;\langle \tuplecolor{\inputPredColor}{I}, \tuplecolor{\outputPredColor}{\mathcal{P}}, \tuplecolor{\statePredColor}{\sigma}
       \rangle \in \tab{1}, a \in \mathcal{P} \cup (\{a\} \setminus I) \}
       \hspace{-5em}$\label{line:primrem}
       \vspace{-0.1em}
     } %
     \uElseIf{$\type(t) = \join\qquad\tcc*[h]{$o{=}2$ children of~$t$}$}{
       \makebox[2.8cm][l]{\hspace{-1em}$\tab{t} \eqdef 
       \{ \langle \tuplecolor{\inputPredColor}{I}, \tuplecolor{\outputPredColor}{\mathcal{P}_1 \cup \mathcal{P}_2}, \tuplecolor{\statePredColor}{\sigma}
         \rangle$}$|\;\langle \tuplecolor{\inputPredColor}{I}, \tuplecolor{\outputPredColor}{\mathcal{P}_1}, \tuplecolor{\statePredColor}{\sigma} \rangle \in \tab{1}, \langle \tuplecolor{\inputPredColor}{I}, \tuplecolor{\outputPredColor}{\mathcal{P}_2}, \tuplecolor{\statePredColor}{\sigma} \rangle \in \tab{2}\}\hspace{-5em}$\label{line:primjoin}
       \vspace{-0.1em}
     } 
     \Return $\tab{t}$
     \vspace{-0.25em}
     \caption{Table algorithm~$\PRIM(t, \chi(t), \Pi_t,
       \langle \tau_1, \ldots, \tau_o \rangle)$ for nodes of nice TDs.}
 \label{fig:prim}\algorithmfootnote{
\renewcommand{\eqdef}{{\ensuremath{\,\mathrel{\mathop:}=}}}
  For a function~$\sigma$ mapping~$x$ to~$\sigma(x)$, we let $\MARR{\sigma}{x} \eqdef \sigma\setminus \{x\mapsto \sigma(x)\}$ be the function~$\sigma$ without containing~$x$.
Further, for given set~$S$ and an element~$e$, we let
  $\MAI{S}{e} \eqdef S \cup \{e\}$ and
  $\MAR{S}{e} \eqdef S \setminus \{e\}$.}
\end{algorithm}%
\renewcommand{\eqdef}{{\ensuremath{\,\mathrel{\mathop:}=}}}

\subsection{Exploiting Treewidth for SCC-bounded \ASP}

Similar to the table algorithm sketched above, we present next a table algorithm~\PRIM for solving 
consistency of SCC-bounded \ASP.
Let therefore~$\Pi$ be a given SCC-bounded program of largest SCC size~$\ell$
and~$\mathcal{T}=(T,\chi)$ be a tree decomposition of~$G_\Pi$.
Before we discuss the tables and the algorithm itself,
we need to define level mappings similar to related work~\cite{Janhunen06}, but adapted
to SCC-bounded programs.
Formally, a \emph{level mapping}~$\sigma: A\rightarrow \{0,\ldots,\ell{-}2\}$ over atoms~$A\subseteq\at(\Pi)$ is a function mapping each atom~$a\in A$ 
to a level~$\sigma(a)$ such that the level does not exceed~$\ell_{\scc(a)}$, i.e.,~$\sigma(a)<\ell_{\scc(a)}$.
These level mappings are used in the construction of the tables of~\PRIM,
where each table~$\tab{t}$ for a node~$t$ of TD~$\mathcal{T}$ 
consists of rows of the form $\langle I, \mathcal{P}, \sigma\rangle$,
where~$I\subseteq \chi(t)$ is an interpretation of atoms~$\chi(t)$, $\mathcal{P}\subseteq\chi(t)$
is a set of atoms in~$\chi(t)$ that are proven, and~$\sigma$ is a level mapping over~$\chi(t)$.
Before we discuss the table algorithm, we need auxiliary notation.
Let~$\gatherproof(I, \sigma, \Pi_t)$ be a subset of atoms~$I$
containing all atoms~$a\in I$
where there is a rule~$r\in\Pi_t$ proving~$a$ with~$\sigma$.
However, $\sigma$ provides for~$a$ only level numbers \emph{within} the SCC of~$a$,
i.e.,~$\gatherproof$ 
requires the relaxed characterization of 
provability that considers~$\scc(a)$, as given in Section~\ref{sec:background}.
%
%
Then, we denote by~$\possord(\sigma, I)$ those
set of level mappings~$\sigma'$ that extend~$\sigma$ by atoms in~$I$,
where for each atom~$a\in I$, we have a level~$\sigma'(a)$ with~$\sigma'(a)< \ell_{\scc(a)}$.
Further, we let~$\isminimal(\sigma,\Pi_t)$ be $0$ if~$\sigma$ is not \emph{minimal},
i.e., if there is an atom~$a$ with~$\sigma(a)>0$ where
a rule~$r\in\Pi_t$ proves~$a$ with a level mapping~$\rho$ that is identical to~$\sigma$, but sets~$\rho(a)=\sigma(a)-1$, and be~$1$ otherwise.
Listing~\ref{fig:prim} depicts an algorithm~\PRIM for solving consistency of SCC-bounded \ASP.
The algorithm is inspired by an approach for HCF logic programs~\cite{FichteHecher19}, whose idea 
is to evaluate~$\Pi$ in parts, given by the
tree decomposition~$\mathcal{T}$. 
For the ease of presentation, algorithm~\PRIM is presented for nice tree decompositions,
where we have a clear case distinction for every node~$t$ depending on
the node type~$\type(t)\in\{\leaf, \intr, \rem, \join\}$.
For arbitrary decompositions the cases are interleaved.
If~$\type(t)=\leaf$, we have that~$\chi(t)=\emptyset$ and therefore
for~$\chi(t)$ the interpretation, the set of proven atoms as well as the level mapping is empty, cf. Line~\ref{line:primleaf} of Listing~\ref{fig:prim}.
Whenever an atom~$a\in\chi(t)$ is introduced, i.e., if $\type(t)=\intr$,
we construct succeeding rows of the form~$\langle I', \mathcal{P}', \sigma'\rangle$ for every
row in the table~$\tab{1}$ of the child node of~$t$.
We take such a row~$\langle I, \mathcal{P}, \sigma\rangle$ of~$\tab{1}$ and guess whether~$a$ is in~$I$, resulting in~$I'$, and ensure that~$I'$ satisfies~$\Pi_t$,
as given in Line~\ref{line:primintr1}.
Then, Line~\ref{line:primintr2} takes succeeding level mappings~$\sigma'$ of~$\sigma$, as given by~$\possord$, that are minimal (see~$\isminimal$) and we finally ensure that
the proven atoms~$\mathcal{P}'$ update~$\mathcal{P}$ by~$\gatherproof(I',\sigma',\Pi_t)$.
Notably, if duplicate answer sets are not an issue, one can remove the occurence
of~$\isminimal$ in Line~\ref{line:primintr2}.
Whenever an atom~$a$ is forgotton in node~$t$, i.e., if $\type(t)=\rem$,
we take in Line~\ref{line:primrem} only rows of the table~$\tab{1}$ for the child node of~$t$,
where either~$a$ is not in the interpretation or~$a$ is proven,
and remove~$a$ from the row accordingly.
By the properties of TDs, it is guaranteed that we have encountered all rules
involving~$a$ in any node below~$t$.
Finally, if~$t$ is a join node ($\type(t)=\join$), we ensure in Line~\ref{line:primjoin} that
we take only rows of both child tables of~$t$, which agree on interpretations and
level mappings, and that an atom is proven if it is proven in one of the two child rows.

\begin{figure}[t]%
  \centering
  \shortversion{ %
    \includegraphics[scale=0.9]{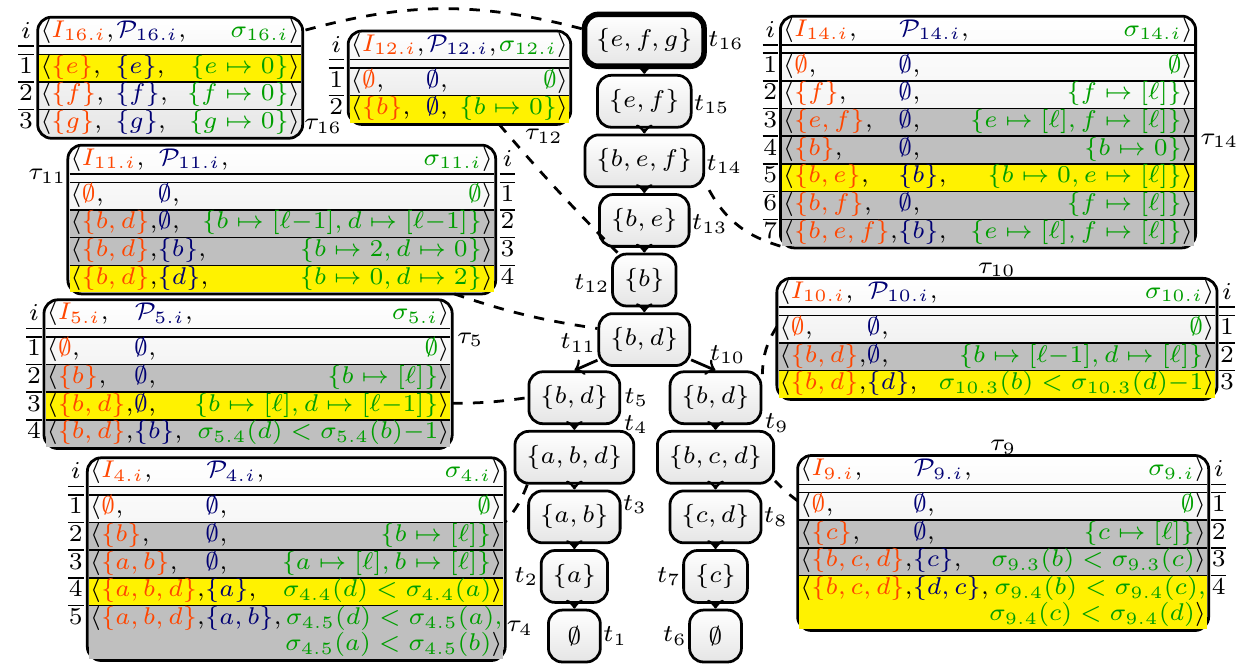}
    \vspace{-1em}
    \caption{Tables obtained by DP on a TD~$\mathcal{T}'$ using algorithm~\PRIM of Listing~\ref{fig:prim}.}\label{fig:tables}
  }%
\end{figure}
\begin{example}
Recall program~$\Pi$ with~$\ell=5$ from Example~\ref{ex:running1}.
Figure~\ref{fig:tables} shows a nice TD~$\mathcal{T}'$ of~$G_\Pi$
and lists selected tables~$\tab{1}, \ldots, \tab{16}$ that
are obtained during DP by using~$\PRIM$ (cf., Listing~\ref{fig:prim})
on TD~$\mathcal{T}'$. Rows highlighted in gray are discarded and do not lead to an answer set, yellow highlighted rows form one answer set.
For brevity, we compactly represent tables by grouping rows according to similar level mappings.
We write~$[\ell]$ for any value in~$\{0,\ldots,\ell{-}2\}$ and 
we sloppily write, e.g.,~$\sigma_{9.3}(b)<\sigma_{9.3}(c)$ to indicate any level mapping~$\sigma_{9.3}$ in row~$3$ of table~$\tab{9}$, where~$b$ has a smaller level than~$c$.

Node~$t_1$ is a leaf ($\type(t_1)=\leaf$) and therefore~$\tab{1}=\{\langle\emptyset,\emptyset,\emptyset\rangle\}$ as stated in Line~\ref{line:primleaf}.
Then, nodes~$t_2,t_3$ and~$t_4$ are introduce nodes.
Therefore, table~$\tab{4}$ is the result of Lines~\ref{line:primintr1} and~\ref{line:primintr2} executed for nodes~$t_2,t_3$ and~$t_4$, by introducing $a,b$, and~$d$, respectively.
Table~$\tab{4}$ contains all interpretations restricted to~$\{a,b,d\}$
that satisfy~$\Pi_{t_4}=\{r_1,r_2,r_3\}$, cf., Line~\ref{line:primintr1}.
Further, each row contains a level mapping among atoms in the interpretation such that the corresponding set of proven atoms is obtained, cf., Line~\ref{line:primintr2}.
Row~4 of~$\tab{4}$ for example requires a level mapping~$\sigma_{4.4}$ with~$\sigma_{4.4}(d)<\sigma_{4.4}(a)$ for~$a$ to be proven.
Then, node~$t_5$ is a forget node forgetting~$a$, which keeps only rows, where either~$a$ is not in the interpretation or~$a$ is in the set of proven atoms, and removes~$a$ from the result.
The result of Line~\ref{line:primrem} on~$t_5$ is displayed in table~$\tab{5}$, where Row~3 of~$\tab{4}$ does not have a successor in~$\tab{5}$ since~$a$ is not proven.
For leaf node~$t_6$ we have~$\tab{t_6}=\tab{t_1}$.
Similarly to before, $t_7,t_8$, and~$t_9$ are introduce nodes
and~$\tab{9}$ depicts the resulting table for~$t_9$.
Table~$\tab{10}$ does not contain any successor row of Row~2 of~$\tab{9}$,
since~$c$ is not proven.
Node~$t_{11}$ is a join node combining rows of~$\tab{5}$ and~$\tab{9}$ as given by Line~\ref{line:primjoin}.
Observe that Row~3 of~$\tab{5}$ does not match with any row in~$\tab{9}$. Further, combining Row~3 of~$\tab{5}$ with Row~3 of~$\tab{9}$ results in Row~4 of~$\tab{11}$ (since~$\ell{-}2=3$).
The remaining tables can be obtained similarly. Table~$\tab{16}$ for the root node only depicts (solution) rows, where each atom is proven. 
\end{example}

In contrast to existing work~\cite{FichteHecher19},
if largest SCC size~$\ell< k$, where~$k$ is the treewidth of primal graph~$G_\Pi$,
our algorithm runs in time better than the lower bound given by Proposition~\ref{prop:lowerbound}.
Further, existing work~\cite{FichteHecher19} does not precisely characterize answer sets,
but algorithm \PRIM of Listing~\ref{fig:prim} exactly computes all the answer sets of~$\Pi$.
Intuitively, the reason for this is that level mappings for an atom~$x\in\at(\Pi)$ do not differ in different
bags of~$\mathcal{T}$, but instead we use the same level (at most~$\ell_{\scc(x)}$ many possibilities) 
for~$x$ in all bags. 
Notably, capturing all the answer sets of~$\Pi$ allows that \PRIM can be slightly extended 
to count the answer sets of~$\Pi$ by extending the rows by an integer for counting accordingly.
This can be extended further by enumerating
all the answer sets with linear delay. The resulting enumeration 
algorithm is an anytime algorithm
and just keeps for each row of a table predecessor rows.

\subsubsection*{Consequences on Correctness and Runtime.}

Next, we sketch correctness, which finally allows us to show Theorem~\ref{theo:runtimescc}.

\begin{lemma}[Correctness]\label{thm:algo:correct}
Let~$\Pi$ be a HCF program, where the treewidth of~$G_\Pi$ is at most~$k$ and where
every SCC~$C$ satisfies $|C|{+}1\leq \ell$. 
Then, for a given tree decomposition~$\mathcal{T}=(T,\chi)$ of primal graph $G_\Pi$, algorithm~$\PRIM$ executed for each node~$t$ of~$T$ in post-order is correct.
\end{lemma}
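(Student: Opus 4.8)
The plan is to prove correctness by the standard dynamic-programming induction over the tree decomposition, establishing a precise invariant for each table and then reading off the theorem from the root. First I would state the \emph{soundness-and-completeness invariant} at a node $t$: a row $\langle I, \mathcal{P}, \sigma\rangle$ belongs to $\tab{t}$ if and only if there exists an interpretation $\hat I \subseteq \chi_{\leq t}$ with $\hat I \cap \chi(t) = I$ that models the bag programs $\Pi_{t'}$ of all nodes $t'$ below $t$, together with a level mapping $\hat\sigma$ over $\hat I$ extending $\sigma$, such that $\mathcal{P}$ is exactly the set of atoms of $\chi(t)$ already proven via some rule in a processed bag program, and such that $\sigma$ is minimal in the sense enforced by $\isminimal$. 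The point of tracking $\mathcal{P}$ separately from $I$ is that an atom may be proven by a rule that will only be fully contained in a bag strictly below $t$, so provenness must be remembered once established; and the point of storing $\sigma$ globally (one level per atom, at most $\ell_{\scc(x)}$ choices) rather than locally is exactly what lets the join node merge rows by syntactic equality of interpretation and level mapping.

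Next I would carry out the induction proper, one case per node type, checking that each line of Listing~\ref{fig:prim} preserves the invariant. The \emph{leaf} case is immediate since $\chi(t)=\emptyset$. For the \emph{introduce} case I would verify that guessing $a\in I'$ or not, filtering by $I'\models\Pi_t$, extending $\sigma$ to $\sigma'$ via $\possord$, discarding non-minimal $\sigma'$ via $\isminimal$, and augmenting $\mathcal{P}$ by $\gatherproof(I',\sigma',\Pi_t)$ corresponds exactly to extending $\hat I$ and $\hat\sigma$ by the single new atom while newly covering precisely the rules in $\Pi_t$ that were not covered before. For the \emph{forget} case, the connectedness property of the TD guarantees that every rule mentioning the forgotten atom $a$ lies in some already-processed bag; hence the filter ``$a\in\mathcal{P}$ or $a\notin I$'' faithfully enforces that every forgotten true atom has been proven, and projecting away $a$ from $I,\mathcal{P},\sigma$ is sound. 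For the \emph{join} case I would argue that the two child witnesses share the common part $I$ and $\sigma$ on $\chi(t)$ and cover disjoint sets of rules below the two branches (again by connectedness), so the union $\mathcal{P}_1\cup\mathcal{P}_2$ records exactly the atoms proven on either side.

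The main obstacle I expect is the provenness bookkeeping against the \emph{relaxed, SCC-local} level-mapping characterization recalled at the end of Section~\ref{sec:background}: I must show that computing $\gatherproof$ with levels compared only \emph{within} $\scc(a)$ (i.e. checking $\sigma(b)<\sigma(a)$ only for $b\in B^+_r\cap\scc(a)$), and doing so bag-by-bag, still yields globally that $\hat I$ is proven in the full sense, so that the root table is nonempty iff $\Pi$ is consistent. The delicate point is that a single rule proving $a$ must be entirely contained in one bag for $\gatherproof$ to fire on it, which is fine since $\at(r)\subseteq\chi(t)$ for $r\in\Pi_t$; but the level comparison tying together atoms of one SCC must be consistent across all bags, and this is precisely why $\sigma$ assigns one fixed level per atom rather than a per-bag level. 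I would close the argument by appealing to the soundness of the SCC-relaxed characterization (justified in Section~\ref{sec:background} via~\cite{Janhunen06,Ben-EliyahuDechter94}): minimality of the global $\hat I$ as a model of the reduct is equivalent to every true atom being proven within its own SCC under a common acyclic level mapping, which is exactly what the invariant maintains.

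The remaining cases (correct propagation of $\isminimal$ to avoid spurious non-minimal models, and the base/root reading) are routine given the invariant, so I would not grind through them; the conceptual weight sits entirely in reconciling the local, SCC-restricted provenness check with the global minimality of the answer set.
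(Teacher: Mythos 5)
Your proposal is correct and follows essentially the same route as the paper's own proof sketch: a per-node invariant (bag satisfiability, extendability of $I$ to an interpretation over $\chi_{\leq t}$ satisfying the program below $t$, $\mathcal{P}$ characterizing exactly the atoms proven with $\sigma$ by already-processed rules, and minimality of $\sigma$), established by induction over the node types for both soundness and completeness, with the key observation that levels are assigned globally per atom (SCC-bounded, hence at most $\ell_{\scc(x)}$ values) so that the SCC-relaxed provenness check of~\cite{Janhunen06} composes across bags and join nodes can match rows syntactically. One minor slip: in the join case the rule sets $\Pi_{\leq t'}$ and $\Pi_{\leq t''}$ of the two children are generally \emph{not} disjoint (both contain $\Pi_t$, since the bags coincide), but disjointness is not needed — $\Pi_{\leq t}=\Pi_{\leq t'}\cup\Pi_{\leq t''}$ already yields $\mathcal{P}=\mathcal{P}_1\cup\mathcal{P}_2$, so the step stands.
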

\begin{proof}[Proof (Sketch)]
The proof consists of both soundness, which shows that only correct data is in the tables, and 
completeness saying that no row of any table is missing.
Soundness is established by showing an invariant for every node~$t$, where
the invariant is assumed for every child node of~$t$.
For the invariant, we use auxiliary notation \emph{program~$\Pi_{<t}$ strictly below~$t$}
consisting of~$\Pi_{t'}$ for any node~$t'$ below~$t$, as well as the \emph{program~$\Pi_{\leq t}$ below~$t$}, where~$\Pi_{\leq t}\eqdef \Pi_{< t}\cup \Pi_t$.
Intuitively, this invariant for~$t$ states that every row~$\langle I, \mathcal{P}, \sigma\rangle$
of table~$\tab{t}$ ensures (1) ``satisfiability'': $I\models \Pi_t$, (2)``answer set extendability'': $I$ can be extended to an answer set of~$\Pi_{< t}$, (3)``provability'': $a\in\mathcal{P}$ if and only if 
there is a rule in~$\Pi_{\leq t}$ proving~$a$ with~$\sigma$, and (4)``minimality'':
there is no~$a\in\mathcal{P}, r\in\Pi_{\leq t}$ such that~$r$ proves~$a$
with~$\sigma'$, where~$\sigma'$ coincides with~$\sigma$, but sets~$\sigma'(a)=\sigma(a)-1$.
Notably, the invariant for the empty root node~$n=\rootOf(T)$ ensures that if~$\tab{n}\neq\emptyset$,
there is an answer set of~$\Pi$.
Completeness can be shown by establishing that if~$\tab{t}$ is complete,
then every potential row that fulfills the invariant for any child node~$t'$ of~$t$,
is indeed present in the corresponding table~$\tab{t'}$.
\end{proof}

\begin{restatetheorem}[theo:runtimescc]
\begin{theorem}[Runtime of SCC-bounded \ASP]
Assume a HCF logic program~$\Pi$,
where the treewidth of the primal graph~$G_\Pi$ of~$\Pi$ is at most~$k$
and~$\ell$ is the largest SCC size.
Then, there is an algorithm for deciding the consistency of~$\Pi$,
running in time~$2^{\mathcal{O}(k\cdot \log(\lambda))} \cdot \poly(\Card{\at(\Pi)})$,
where~$\lambda=\min(\{k,\ell\})$. 
\end{theorem}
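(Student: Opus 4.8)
The plan is to observe that, by Lemma~\ref{thm:algo:correct}, the algorithm~\PRIM already decides consistency correctly once it is run in post-order over a tree decomposition of~$G_\Pi$; hence it only remains to bound its running time. First I would compute a nice tree decomposition~$\TTT=(T,\chi)$ of~$G_\Pi$ of width~$\mathcal{O}(k)$. By the $5$-approximation result this takes~$2^{\mathcal{O}(k)}\cdot\poly(\Card{\at(\Pi)})$ time, and turning it into a nice decomposition neither increases the width nor increases the number of nodes by more than a linear factor, so~$T$ has~$\mathcal{O}(k\cdot\Card{\at(\Pi)})$ nodes.

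The central step is to bound the number of rows of each table~$\tab{t}$. A row~$\langle I,\mathcal{P},\sigma\rangle$ consists of an interpretation~$I\subseteq\chi(t)$, a set~$\mathcal{P}\subseteq\chi(t)$ of proven atoms, and a level mapping~$\sigma$ over~$\chi(t)$. The first two components each range over at most~$2^{\Card{\chi(t)}}\leq 2^{k+1}$ possibilities. For~$\sigma$, the decisive observation is that every atom~$a$ is assigned a level~$\sigma(a)<\ell_{\scc(a)}\leq\ell-1$, so there are at most~$\prod_{a\in\chi(t)}\ell_{\scc(a)}\leq\ell^{k+1}$ level mappings per bag. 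Consequently each table has at most~$2^{2(k+1)}\cdot\ell^{k+1}=2^{\mathcal{O}(k\cdot\log(\ell))}$ rows. This is exactly where the improvement over~\cite{FichteHecher19} arises: representing provability orderings by per-bag total orders costs~$(k{+}1)!=2^{\Theta(k\log k)}$, whereas pinning each atom to a single global level bounded by its SCC size costs only~$\ell^{\mathcal{O}(k)}$.

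It then suffices to note that each node is processed in time polynomial in the sizes of~$\tab{t}$ and of its children's tables: the auxiliary routines~$\possord$, $\gatherproof$, and~$\isminimal$ all run in time polynomial in~$\Card{\chi(t)}$ and~$\ell$, and the leaf, introduce, forget, and join cases only combine or filter rows. Multiplying by the~$\mathcal{O}(k\cdot\Card{\at(\Pi)})$ nodes gives a total running time of~$2^{\mathcal{O}(k\cdot\log(\ell))}\cdot\poly(\Card{\at(\Pi)})$, which settles the case~$\ell\leq k$, where~$\lambda=\ell$. For the remaining case~$\ell>k$, where~$\lambda=k$, I would simply invoke the known algorithm for HCF~\ASP~\cite{FichteHecher19,Hecher20}, which runs in~$2^{\mathcal{O}(k\cdot\log(k))}$; running whichever of the two algorithms has the smaller exponent yields~$2^{\mathcal{O}(k\cdot\log(\lambda))}\cdot\poly(\Card{\at(\Pi)})$ with~$\lambda=\min(\{k,\ell\})$. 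The main obstacle is not the counting itself but justifying that the global-level representation is sound, i.e.\ that the same atom may keep one fixed level across all bags without missing answer sets; this is precisely what Lemma~\ref{thm:algo:correct} secures through its invariant on~$\Pi_{\leq t}$ together with the agreement condition on level mappings enforced at join nodes.
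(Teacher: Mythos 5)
Your proposal is correct and follows essentially the same route as the paper's own proof: compute a 5-approximate tree decomposition, run \PRIM{} when~$\ell\leq k$ (relying on Lemma~\ref{thm:algo:correct} for correctness), and fall back to the known~$2^{\mathcal{O}(k\cdot\log(k))}$ algorithm of~\cite{FichteHecher19} when~$\ell>k$. In fact, your explicit bound of~$2^{\mathcal{O}(k)}\cdot\ell^{\mathcal{O}(k)}=2^{\mathcal{O}(k\cdot\log(\ell))}$ on the table sizes spells out the key counting step that the paper's proof leaves implicit.
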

\end{restatetheorem}
\begin{proof}[Proof] 
First, we compute~\cite{BodlaenderEtAl13} a tree decomposition~$\mathcal{T}=(T,\chi)$
of~$G_\Pi$ that is a 
5-approximation of~$k=\tw{G_\Pi}$ and has a linear number of nodes,
in time~$2^{\mathcal{O}(k)}\cdot \poly(\Card{\at(\Pi)})$.
%
%
Computing~$\ell_{\scc(a)}$ for each atom~$a\in\at(\Pi)$ can be done
in polynomial time.
If~$\ell >k$, we directly run an algorithm~\cite{FichteHecher19} for the consistency of~$\Pi$.
Otherwise, i.e., if~$\ell \leq k$ we run Listing~\ref{fig:prim}
on each node~$t$ of~$T$ in a bottom-up (post-order) traversal.
In both cases, we obtain a total runtime of~$2^{\mathcal{O}(k\cdot \log(\lambda))} \cdot \poly(\Card{\at(\Pi)})$.
\end{proof}

\section{Treewidth-Aware Reductions for SCC-bounded \ASP}

Next, we present a novel reduction from HCF \ASP to tight \ASP.
Given a head-cycle-free logic program,
we present a treewidth-aware reduction that constructs a tight logic program
with little overhead in terms of treewidth.
Concretely, if each SCC of the given head-cycle-free logic program~$\Pi$ has at most~$\ell{-}1$
atoms, the resulting tight program has treewidth~$\mathcal{O}(k\cdot\log(\ell))$.
In the course of this section, we establish the following theorem.

\begin{theorem}[Removing Cyclicity of SCC-bounded \ASP]\label{thm:asptotight}
Let~$\Pi$ be a HCF program, where the treewidth of~$G_\Pi$ is at most~$k$ and where
every SCC~$C$ satisfies $|C|+1\leq \ell$. 
Then, there is a tight program~$\Pi'$ with treewidth in~$\mathcal{O}(k\cdot\log(\ell))$ such that 
the stable models of~$\Pi$ and~$\Pi'$ projected to the atoms of~$\Pi$ coincide.
\end{theorem}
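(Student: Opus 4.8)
The plan is to make $\Pi$ tight by replacing the implicit level mapping from the characterization in Section~\ref{sec:background} with an explicit binary encoding of levels, turning the acyclicity witness into part of the program. Since every SCC $C$ satisfies $|C|+1\leq\ell$, for an atom $a$ in a non-trivial SCC the level $\sigma(a)$ ranges over $\{0,\ldots,\ell-2\}$ and fits into $m \eqdef \lceil\log(\ell)\rceil$ bits. First I would, for each atom $a$ lying on a positive cycle (SCC of size at least two), introduce bit atoms $a_1,\ldots,a_m$ with complements $\bar a_1,\ldots,\bar a_m$, guessed by the usual even negative loops $a_i\leftarrow\neg\bar a_i$ and $\bar a_i\leftarrow\neg a_i$. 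These loops pass through negation and hence create no positive cycles, leaving tightness intact; atoms in singleton SCCs carry no level and are supported exactly as in the tight (Clark-completion) setting.

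Next I would encode the provability condition. For each rule $r$ and head atom $a\in H_r$ in a non-trivial SCC, introduce a support atom $p_{r,a}$ derived precisely when (i)~$B^+_r\subseteq I$, (ii)~$I\cap(B^-_r\cup(H_r\setminus\{a\}))=\emptyset$, and (iii)~$\sigma(b)<\sigma(a)$ for every $b\in B^+_r\cap\scc(a)$. Condition~(iii) is the only delicate part: for each such pair $(b,a)$ I add a bitwise less-than gadget over the $2m$ bits $b_1,\ldots,b_m,a_1,\ldots,a_m$, using auxiliary atoms that cascade from the most significant bit downward (``equal so far'' and ``strictly decided''). Because these gadgets read only the guessed bits, the derivation of $p_{r,a}$ never depends positively on the original atom $a$, so no new positive cycle arises and $\Pi'$ stays tight. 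Finally, for each cyclic atom $a$ I add the constraint $\leftarrow a,\neg p_{r_1,a},\ldots,\neg p_{r_j,a}$ forcing some rule to prove $a$, and I retain all original rules of $\Pi$ so that its models are preserved.

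For the treewidth bound I would transform a width-$k$ tree decomposition $\mathcal{T}$ of $G_\Pi$. Replacing each original atom by its $m$ bit atoms multiplies every bag by a factor $m=\mathcal{O}(\log\ell)$, already yielding width $\mathcal{O}(k\log\ell)$. The key is locality: the atoms of any rule $r$ form a clique in $G_\Pi$ and thus share a common bag, so every support and comparison gadget for $r$ mentions only atoms already expanded in that bag. Each single comparison needs merely $\mathcal{O}(\log\ell)$ fresh auxiliary atoms, and these can be introduced and forgotten one at a time by subdividing the corresponding node, so the width never exceeds $\mathcal{O}(k\log\ell)$.

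Correctness follows from the level-mapping characterization together with the fact that, $\Pi'$ being tight, its answer sets coincide with its supported models. For the forward direction, an answer set $I$ of $\Pi$ comes with a witnessing level mapping $\sigma$ (relaxed to SCCs), which I read off into bit and support atoms to obtain an answer set of $\Pi'$ projecting to $I$; conversely, the bits of any answer set $I'$ of $\Pi'$ induce a level mapping on $I=I'\cap\at(\Pi)$, and the support atoms together with the constraints certify that every atom of $I$ is proven, so $I$ is an answer set of $\Pi$. The main obstacle I expect is designing the comparison gadget to satisfy all three requirements simultaneously---faithful encoding of $<$, tightness (no reintroduced positive dependency from $p_{r,a}$ back to $a$), and an $\mathcal{O}(\log\ell)$-local treewidth footprint---since a careless bitwise comparison either inflates bags by a factor of $k$ or sneaks a positive edge from the support atoms to the atoms being supported.
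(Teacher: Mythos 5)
There are two genuine gaps here, and they are precisely the points the paper's construction is built around.

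First, your program~$\Pi'$ is not tight. You write that you ``retain all original rules of~$\Pi$ so that its models are preserved,'' but every positive edge of~$D_\Pi$ then survives in~$D_{\Pi'}$, so every positive cycle of~$\Pi$ is still a cycle of~$\Pi'$; the even negative loops you add for the bit atoms are irrelevant to this. Your own correctness argument relies on ``$\Pi'$ being tight, its answer sets coincide with its supported models,'' so the argument collapses at exactly this point. The paper avoids this by \emph{not} keeping the rules: original atoms are opened up by choice rules (Rules~(\ref{asp:guessx})) and each rule of~$\Pi$ is replaced by the constraint~(\ref{asp:check}), which has an empty head and therefore contributes no positive edges; foundedness is then re-established entirely through the level-bit and provability machinery.

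Second, your aggregation constraint $\hsep a,\neg p_{r_1,a},\ldots,\neg p_{r_j,a}$ (and likewise the Clark completion you invoke for atoms in singleton SCCs) destroys the treewidth bound. The rules $r_1,\ldots,r_j$ with head~$a$ can be scattered over the whole tree decomposition, and this single constraint makes $\{a,p_{r_1,a},\ldots,p_{r_j,a}\}$ a clique in the primal graph of~$\Pi'$, forcing all of them into one bag. Concretely, take the program with rules $a\leftarrow b,x_i$ for $i=1,\ldots,n$ and $b\leftarrow a$: its treewidth is~$2$ and $\ell=3$, yet your constraint for~$a$ has $n+1$ atoms, so $\tw{G_{\Pi'}}\geq n$, far beyond $\mathcal{O}(k\cdot\log(\ell))$. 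Your locality argument (``the atoms of any rule form a clique in~$G_\Pi$ and thus share a common bag'') covers the per-rule support and comparison gadgets, but not this per-atom disjunction over all rules. This is the central difficulty, and the paper's answer to it is to \emph{guide provability along the decomposition}: atoms~$p_t^x$ record that~$x$ is proven by some rule of the bag program~$\Pi_t$, atoms~$p_{\leq t}^x$ propagate provability upward via Rules~(\ref{asp:prove}) and~(\ref{asp:prove2}), and the requirement ``some rule proves~$x$'' is only enforced at the node where~$x$ is forgotten, or at the root (Rules~(\ref{asp:checkremove}) and~(\ref{asp:checkremove2})). That way no single rule of~$\Pi'$ ever mentions more than $\mathcal{O}(k\cdot\log(\ell))$ atoms that must share a bag. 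Your remaining ideas are sound and essentially the paper's: binary level encodings of size $\ceil{\log(\ell)}$ per atom, comparisons restricted to $B_r^+\cap\scc(x)$, and support atoms that never feed positively back into original atoms; your bitwise less-than gadget is a legitimate alternative to the paper's enumeration of constant levels via the $x\prec i$ atoms of Rules~(\ref{asp:auxbinary}), and would work once the two issues above are repaired.
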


%

\subsection{Reduction to tight \ASP}

The overall construction of the reduction is inspired by the idea of treewidth-aware reductions~\cite{Hecher20}, where in the following, we assume an SCC-bounded program~$\Pi$ and a tree decomposition~$\mathcal{T}=(T,\chi)$ of~$G_\Pi$ such that the construction of the resulting tight logic program~$\Pi'$ is heavily guided along~$\mathcal{T}$.
In contrast to existing work~\cite{Hecher20}, bounding cycles with the largest SCC size additionally allows to have a ``global'' level mapping~\cite{Janhunen06}, i.e., we do not have different levels for an atom in different bags.
Then, while the overall reduction is still guided along the tree decomposition~$\mathcal{T}$ in order to take care to not increase treewidth too much,
these global level mappings ensure that the tight program is 
guaranteed to preserve all answer sets (projected to the atoms of~$\Pi$), 
as stated in Theorem~\ref{thm:asptotight}.
Before we discuss the construction in detail, we require auxiliary atoms and notation as follows.
In order to \emph{guide} the evaluation of the provability of an atom~$x\in\at(\Pi)$ in a node~$t$ in~$T$ along the decomposition~$\mathcal{T}$, we use atoms~$p_t^x$ and~$p_{\leq t}^x$ to indicate that~$x$ was proven in node~$t$ (with some rule in~$\Pi_t$) and below~$t$, respectively.
Further, we require atoms~$b_x^j$, called \emph{level bits}, for~$x\in\at(\Pi)$ and~$1\leq j\leq \ceil{\log(\ell_{\scc(x)})}$,
which are used as bits in order to represent in a level mapping the level of~$x$ in binary.
To this end, we denote for~$x$ and a number~$i$ with~$0 \leq i < \ell_{\scc(x)}$ 
as well as a position number~$1\leq j\leq \ceil{\log(\ell_{\scc(x)})}$,
the~\emph{$j$-th position of~$i$ in binary} by~$\bval{i}{j}$.
Then, we let~$\bvali{x}{t}{i}$ be the consistent \emph{set of literals
over level bits}~$b_x^j$ that is used to represent level number~$i$ for~$x$ in binary.
More precisely, for each position number~$j$, $\bvali{x}{t}{i}$ contains~$b_x^j$ if~$\bval{i}{j}=1$ and $\neg b_x^j$ otherwise, i.e., if~$\bval{i}{j}=0$.
Finally, we also use auxiliary atoms of the form~$x\prec i$ to indicate that the level for~$x$
represented by~$\bvali{x}{t}{i}$ is indeed smaller than~$i>0$.
%
%
%
\begin{example}
Recall program~$\Pi$, level mapping~$\sigma$, and largest SCC size~$\ell=5$ from Example~\ref{ex:running1}.
For representing~$\sigma$ in binary, we require~$\ceil{\log(\ell{-}1)}=2$
bits per atom~$a\in\at(\Pi)$ and we assume that bits are ordered from least to most significant bit.
So $\bval{\sigma(e)}{0}=\bval{\sigma(e)}{1}=0$, $\bval{\sigma(c)}{0}=1$ and~$\bval{\sigma(c)}{1}=0$.
Then, we have~$\bvali{e}{t}{\sigma(e)}=\{\neg b_e^0, \neg b_e^1\}$,
$\bvali{b}{t}{\sigma(b)}=\{\neg b_b^0, \neg b_b^1\}$,
$\bvali{c}{t}{\sigma(c)}=\{b_c^0, \neg b_c^1\}$,
$\bvali{d}{t}{\sigma(d)}=\{\neg b_d^0, b_d^1\}$, and
$\bvali{a}{t}{\sigma(a)}=\{b_a^0, b_a^1\}$.
\end{example}

Next, we are ready to discuss the treewidth-aware reduction from SCC-bounded \ASP to tight \ASP, which takes~$\Pi$ and~$\mathcal{T}$ and creates a tight logic program~$\Pi'$.
To this end, let~$t$ be any node of~$T$.
First, truth values for each atom~$x\in\chi(t)$ are subject to a guess by Rules~(\ref{asp:guessx})
and by Rules~(\ref{asp:check}) it is ensured that all rules of~$\Pi_t$ are satisfied.
Notably, by the definition of tree decompositions, Rules~(\ref{asp:guessx}) and Rules~(\ref{asp:check}) indeed cover all the atoms of~$\Pi$ and all rules of~$\Pi$, respectively.
Then, the next block of rules consisting of Rules~(\ref{asp:guessx2})--(\ref{asp:checkremove2}) is used for ensuring provability and finally the last block of Rules~(\ref{asp:minimality})--(\ref{asp:minimalityproven2}) is required in order to preserve answer sets, i.e., these rules prevent duplicate answer sets of~$\Pi'$ for one specific answer set of~$\Pi$.

For the block of Rules~(\ref{asp:guessx2})--(\ref{asp:checkremove2}) to ensure provability,
we need to guess the level bits for each atom~$x$ as given in Rules~(\ref{asp:guessx2}).
Rules~(\ref{asp:auxbinary}) ensure that we correctly define~$x\prec i$, which is the case if there exists a bit~$\bval{i}{j}$ that is set to~$1$, but we have~$\neg b^j_x$ and for all larger bits~$\bval{i}{j'}$ that are set to~$0$ ($j'>j$), we also have~$\neg b^{j'}_x$.
Then, for Rules~(\ref{asp:checkfirst}) we slightly abuse notation~$x\prec i$ and
use it also for a set~$X$, where~$X\prec i$ denotes a set of atoms of the form~$x\prec i$
for each~$x\in X$. Rules~(\ref{asp:checkfirst}) make sure that whenever a rule~$r\in\Pi_t$ proves~$x$ with the level mapping given by the level bits over atoms in~$\chi(t)$, 
we have provability~$p_t^x$ for~$x$ in~$t$.
However, only for the atoms of the positive body~$B_r^+$ which are also in the same SCC~$C=\scc(x)$ as~$x$ we need to check that the levels are smaller than the level of~$x$, since by definition of SCCs, there cannot be a positive cycle among atoms of different SCCs.
As a result, if there is a rule, where no atom of the positive body is in~$C$, satisfying the rule is enough for proving~$x$ as given by Rules~(\ref{asp:checkfirst2}).
If provability~$p_t^x$ holds, we also have~$p_{\leq t}^x$ by Rules~(\ref{asp:prove}) 
and provability is propagated from node~$t'$ to its parent node~$t$ by setting~$p_{\leq t}$
if~$p_{\leq t'}$, as indicated by Rules~(\ref{asp:prove2}).
Finally, whenever an atom~$x$ is forgotten in a node~$t$, we require to have provability~$p_{\leq t}^x$ ensured by Rules~(\ref{asp:checkremove}) and~(\ref{asp:checkremove2}) since~$t$ might be~$\rootOf(T)$.

\emph{Preserving answer sets:} The last block consisting of Rules~(\ref{asp:minimality}), (\ref{asp:minimalityproven}), and~(\ref{asp:minimalityproven2}) makes sure that atoms that are false or not in the answer set of~$\Pi'$ get level~$0$ and that
we do prohibit levels for an atom~$x$ that can be safely decreased by one without
loosing provability. This ensures that for each answer set of~$\Pi$ we get exactly one corresponding answer set of~$\Pi'$ and vice versa.

{
\vspace{-1em}
\begin{align}
%
\label{asp:guessx}&\{x\} \hsep&&{\text{for each }x\in\chi(t)\text{; see}^{\ref{foot:choice}}}\\
\label{asp:check}&\leftarrow B_r^+, \overline{B_r^-\cup H_r}&&{\text{for each }r\in\Pi_t}\\[1.5em]
\label{asp:guessx2}&\{b_x^j\} \hsep&&{\text{for each }x\in\chi(t),}\notag\\[-.5em]
&&&1\leq j\leq \ceil{\log(\ell_{\scc(x)})}\text{; see}^{\ref{foot:choice}}\\
\label{asp:auxbinary}	& x \prec i \leftarrow \neg b_x^j, \neg b_x^{j_1}, \ldots, \neg b_x^{j_s}&& \text{for each }x\in\chi(t), C=\scc(x),\notag\\[-.5em]
&&&  1 \leq i < \ell_{C}, 1 \leq j \leq \ceil{\log(\ell_{C})}, \notag\\[-.5em]
&&& \bval{i}{j}{=}1,\{j' \mid j < j' \leq\ceil{\log(\ell_{C})},\notag\\[-.5em]
&&& \bval{i}{j'}{=}0\}=\{j_1,\ldots, j_s\}\\
%
%
%
	\label{asp:checkfirst}&p_{t}^x \leftarrow x, \bvali{x}{t}{i}, (B_r^+{\cap} C) {\prec} i,  B_r^+, \overline{B_r^-{\cup}(H_r{\setminus}\{x\})} &&{\text{for each } r\in\Pi_t,x\in \chi(t)\text{ with }}\notag\\[-.5em]
&&&x\in H_r, C=\scc(x),1\leq i < \ell_C,\notag\\[-.5em]
&&&\text{and } B_r^+\cap C\neq\emptyset\\ 
	\label{asp:checkfirst2}&p_{t}^x \leftarrow x,  B_r^+, \overline{B_r^-{\cup}(H_r{\setminus}\{x\})} &&{\text{for each } r\in\Pi_t,x\in \chi(t)\text{ with }}\notag\\[-.5em]
&&&x\in H_r, B_r^+\cap\scc(x)=\emptyset\\
%
%
%
%
	\label{asp:prove}&p^x_{\leq t} \hsep p^x_t&&{\text{for each } x\in \chi(t)}\\
\label{asp:prove2}&p^x_{\leq t} \hsep p^x_{\leq t'}&&{\text{for each } x\in \chi(t),}\notag\\[-.5em]
&&& t'\in\children(t), x\in\chi(t')\\
\label{asp:checkremove}&\hsep x, \neg p^{x}_{\leq t'}&&{\text{for each }{t'\in\children(t)},}\notag\\[-.45em]
	&&& x\in\chi(t')\setminus\chi(t)\\
	\label{asp:checkremove2}&\hsep x, \neg p^{x}_{\leq n}&&{\text{for each }x\in\chi(n),}\notag\\[-.35em]&&&n=\rootOf(T)\\[1.5em]
\label{asp:minimality}&\leftarrow \neg x, b^j_x&&\text{for each }x\in\chi(t),\notag\\[-.5em]
&&&1\leq j\leq \ceil{\log(\ell_{\scc(x)})}\\
\label{asp:minimalityproven}&\leftarrow x,\bvali{x}{t}{i}, (B_r^+{\cap} C) {\prec} i{-}1,  B_r^+, \overline{B_r^-{\cup}(H_r{\setminus}\{x\})}&&{\text{for each } r\in\Pi_t,x\in \chi(t)\text{ with }}\notag\\[-.5em]
&&&x\in H_r, C=\scc(x),2\leq i < \ell_C,\notag\\[-.5em]
&&&\text{and } B_r^+\cap C\neq\emptyset\\
	%
	%
	\label{asp:minimalityproven2}&\leftarrow x,\bvali{x}{t}{i}, B_r^+, \overline{B_r^-{\cup}(H_r{\setminus}\{x\})}&&{\text{for each } r\in\Pi_t,x\in \chi(t)\text{ with }}\notag\\[-.5em]
&&&x\in H_r, C=\scc(x), 1\leq i< \ell_C, \notag\\[-.5em]
&&&\text{and }B_r^+\cap C=\emptyset
\end{align}
\vspace{-.05em}
}

\footnoteitext{\label{foot:choice}A \emph{choice rule}~\cite{SimonsNiemelaeSoininen02} is of the form $\{a\} \hsep$ and in a HCF logic program it corresponds to a disjunctive rule $a \lor a' \hsep$, where $a'$ is a fresh atom.}

\begin{example}
Recall program~$\Pi$ of Example~\ref{ex:running1} and TD~$\mathcal{T}=(T,\chi)$ of~$G_\Pi$ as given in Figure~\ref{fig:graph-td}.
Rules~(\ref{asp:guessx}) and Rules~(\ref{asp:check}) are constructed for each atom~$a\in\at(\Pi)$ and for each rule~$r\in\Pi$, respectively.
Similarly, Rules~(\ref{asp:guessx2}) are constructed for each of the~$\ceil{\log(\ell_{a})}$ many bits of each atom~$a\in\at(\Pi)$.
Rules~(\ref{asp:auxbinary}) serve as auxiliary definition, where for, e.g., atom~$c$
we construct~$c{\prec}1 \leftarrow \neg b_c^0, \neg b_c^1$; 
$c{\prec}2 \leftarrow \neg b_c^1$;
$c{\prec}3 \leftarrow \neg b_c^0$; and 
$c{\prec}3 \leftarrow \neg b_c^1$.
Next, we show Rules~(\ref{asp:checkfirst})--(\ref{asp:minimalityproven2}) for node~$t_2$ of~$T$.

\noindent\begin{tabular}{@{\hspace{0.15em}}l@{\hspace{0.15em}}|@{\hspace{0.15em}}l@{\hspace{0.0em}}}
Rule number & Rules\\
\hline
(\ref{asp:checkfirst})& $p_{t_2}^b \leftarrow b, \bvali{b}{}{1}, d{\prec}1, d$; $p_{t_2}^b \leftarrow b, \bvali{b}{}{2}, d{\prec}2, d$; $p_{t_2}^b \leftarrow b, \bvali{b}{}{3}, d{\prec}3, d$;\\
& $p_{t_2}^c \leftarrow c, \bvali{c}{}{1}, d{\prec}1, d$; $p_{t_2}^c \leftarrow c, \bvali{c}{}{2}, d{\prec}2, d$; $p_{t_2}^c \leftarrow c, \bvali{c}{}{3}, d{\prec}3, d$;\\
& $p_{t_2}^d \leftarrow d, \bvali{d}{}{1}, b{\prec}1, c{\prec}1, b,c$; $p_{t_2}^d \leftarrow d, \bvali{d}{}{2}, b{\prec}2, c{\prec}2, b,c$;\\& $p_{t_2}^d \leftarrow d, \bvali{d}{}{3}, b{\prec}3, c{\prec}3, b,c$\\
(\ref{asp:prove})& $p_{\leq t_2}^b\leftarrow p_{t_2}^b$; $p_{\leq t_2}^c\leftarrow p_{t_2}^c$; $p_{\leq t_2}^d\leftarrow p_{t_2}^d$\\
%
%
(\ref{asp:minimality}) & $\leftarrow \neg b, b^0_b$; $\leftarrow \neg b, b^1_b$; $\leftarrow \neg c, b^0_c$; $\leftarrow \neg c, b^1_c$; $\leftarrow \neg d, b^0_d$; $\leftarrow \neg d, b^1_d$\\ 
(\ref{asp:minimalityproven}) & $\leftarrow b, \bvali{b}{}{2}, d{\prec}1, d$; $\leftarrow b, \bvali{b}{}{3}, d{\prec}2, d$;\\
& $\leftarrow c, \bvali{c}{}{2}, d{\prec}1, d$; $\leftarrow c, \bvali{c}{}{3}, d{\prec}2, d$;\\
& $\leftarrow d, \bvali{d}{}{2}, b{\prec}1, c{\prec}1, b,c$; 
 $\leftarrow d, \bvali{d}{}{3}, b{\prec}2, c{\prec}2, b,c$\\
(\ref{asp:checkfirst2}),(\ref{asp:prove2})--(\ref{asp:checkremove2}),(\ref{asp:minimalityproven2})&-\\
\end{tabular}%

\smallskip
For the root node~$t_5$ of~$T$, we obtain the following Rules~(\ref{asp:checkfirst})--(\ref{asp:minimalityproven2}).

\noindent\begin{tabular}{@{\hspace{0.15em}}l@{\hspace{0.15em}}|@{\hspace{0.15em}}l@{\hspace{0.0em}}}
Rule number & Rules\\
\hline
(\ref{asp:checkfirst2})& $p_{t_5}^b \leftarrow b, e, \neg f$\\
(\ref{asp:prove})& $p_{\leq t_5}^b\leftarrow p_{t_5}^b$; $p_{\leq t_5}^e\leftarrow p_{t_5}^e$; $p_{\leq t_5}^f\leftarrow p_{t_5}^f$\\
%
(\ref{asp:prove2})& $p_{\leq t_5}^b \leftarrow p_{\leq t_3}^b$; $p_{\leq t_5}^e \leftarrow p_{\leq t_3}^e$; $p_{\leq t_5}^e \leftarrow p_{\leq t_4}^e$; $p_{\leq t_5}^f \leftarrow p_{\leq t_4}^f$\\
(\ref{asp:checkremove}) & $\leftarrow d, \neg p_{\leq t_3}^d$; $\leftarrow g, \neg p_{\leq t_4}^g$\\
(\ref{asp:checkremove2}) & $\leftarrow b, \neg p_{\leq t_5}^b$; $\leftarrow e, \neg p_{\leq t_5}^e$; $\leftarrow f, \neg p_{\leq t_5}^f$\\
(\ref{asp:minimality}) & $\leftarrow \neg b, b^0_b$; $\leftarrow \neg b, b^1_b$; $\leftarrow \neg e, b^0_e$; $\leftarrow \neg e, b^1_e$; $\leftarrow \neg f, b^0_f$; $\leftarrow \neg f, b^1_f$\\ 
(\ref{asp:minimalityproven2}) & $\leftarrow b, \bvali{b}{}{1}, e, \neg f$; $\leftarrow b, \bvali{b}{}{2}, e, \neg f$; $\leftarrow b, \bvali{b}{}{3},e, \neg f$\\
(\ref{asp:checkfirst}),(\ref{asp:minimalityproven})&-
\end{tabular}%
\end{example}

\subsubsection*{Correctness and Treewidth-Awareness.} 

\begin{lemma}[Correctness]\label{thm:red:correct}
Let~$\Pi$ be a HCF program, where the treewidth of~$G_\Pi$ is at most~$k$ and where
every SCC~$C$ satisfies $|C|{+}1\leq \ell$. 
Then, the tight program~$\Pi'$ obtained by the reduction above on~$\Pi$ and a tree decomposition~$\mathcal{T}=(T,\chi)$ of primal graph $G_\Pi$, is correct.
Formally, for any answer set~$I$ of~$\Pi$ there is exactly one answer set~$I'$ of~$\Pi'$ as given by Rules~(\ref{asp:guessx})--(\ref{asp:minimalityproven2}) and vice versa.
\end{lemma}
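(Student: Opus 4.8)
The plan is to prove the two directions of the biconditional separately, establishing a bijection between answer sets of~$\Pi$ and answer sets of~$\Pi'$. Since the reduction is guided along the tree decomposition~$\mathcal{T}$, the natural strategy is to argue by induction over the nodes of~$T$ in post-order, mirroring the invariant used in the correctness proof of algorithm~\PRIM (Lemma~\ref{thm:algo:correct}). The key observation that makes this tractable is that the level bits~$b_x^j$ encode a single \emph{global} level mapping~$\sigma$ per atom~$x$, so unlike genuine tree-decomposition-local constructions there is no need to reconcile differing levels across bags; the atoms~$p_t^x$ and~$p_{\leq t}^x$ merely propagate provability upward along~$T$.

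First I would set up the forward direction. Given an answer set~$I$ of~$\Pi$, the characterization from Section~\ref{sec:background} yields a level mapping~$\sigma$ over~$I$ witnessing that every~$a\in I$ is proven, where~$\sigma(a)<\ell_{\scc(a)}$ after relaxing Condition~(iii) to the SCC of~$a$. I would construct the candidate~$I'$ by taking~$I$ together with: the level bits~$\bvali{x}{t}{\sigma(x)}$ for each~$x\in I$ (and all-zero bits, forced by Rules~(\ref{asp:minimality}), for~$x\notin I$); the derived atoms~$x\prec i$ exactly as defined by Rules~(\ref{asp:auxbinary}); and the provability atoms~$p_t^x$, $p_{\leq t}^x$ set according to which rules prove~$x$ in which bags. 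I would then verify that~$I'$ satisfies all rules~(\ref{asp:guessx})--(\ref{asp:minimalityproven2}) and, crucially, that~$I'$ is a \emph{minimal} model of the reduct~$\Pi'^{I'}$: the guessing Rules~(\ref{asp:guessx}) and~(\ref{asp:guessx2}) are choice rules so their atoms are self-justifying, while the~$p$-atoms are derived bottom-up and each requires a genuine proving rule, so minimality of the~$p$-atoms reduces to the provability of~$I$ under~$\sigma$. The constraints~(\ref{asp:checkremove}), (\ref{asp:checkremove2}) guarantee that every atom forgotten below some node, and every atom surviving to the root, is actually proven, which together with the connectedness property of~$\mathcal{T}$ covers \emph{all} of~$I$.

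Conversely, given an answer set~$I'$ of~$\Pi'$, I would set~$I\eqdef I'\cap\at(\Pi)$ and recover~$\sigma$ by reading the level bits. The satisfaction Rules~(\ref{asp:check}) force~$I\models\Pi$, and the forget-constraints~(\ref{asp:checkremove}), (\ref{asp:checkremove2}) together with the propagation Rules~(\ref{asp:prove}), (\ref{asp:prove2}) and the proving Rules~(\ref{asp:checkfirst}), (\ref{asp:checkfirst2}) force every~$a\in I$ to be proven with~$\sigma$ in the SCC-relaxed sense; hence~$I$ is an answer set of~$\Pi$ by the characterization. The delicate part is \emph{uniqueness} — that exactly one~$I'$ corresponds to each~$I$. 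This is where the minimality block Rules~(\ref{asp:minimality})--(\ref{asp:minimalityproven2}) does the work: Rules~(\ref{asp:minimality}) pin down the levels of false atoms to~$0$, while Rules~(\ref{asp:minimalityproven}) and~(\ref{asp:minimalityproven2}) forbid any level that could be safely decremented without losing provability, thereby forcing~$\sigma$ to be the canonical least level mapping. I would argue that these constraints uniquely determine every level bit, so the map~$I'\mapsto I$ is injective on answer sets. The main obstacle I expect is precisely this uniqueness argument: one must show that the ``no safe decrement'' constraints have a unique fixed point for each answer set of~$\Pi$, which requires carefully analyzing the interaction between the binary encoding of~$x\prec i$ in Rules~(\ref{asp:auxbinary}) and the per-SCC comparisons in Rules~(\ref{asp:minimalityproven}), ruling out both spurious smaller levels (via the decrement constraints) and spurious larger ones (via minimality of~$I'$ as a stable model, since unnecessarily large levels would make some~$b_x^j$ unfounded). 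Since the reduction operates SCC-by-SCC but traverses~$\mathcal{T}$ globally, I would localize the entire argument to a single fixed SCC~$C=\scc(x)$ at a time, because levels only ever compare atoms within the same SCC, and then assemble the global bijection componentwise.
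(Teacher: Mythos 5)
Your overall strategy is the same as the paper's: construct $I'$ explicitly from $I$ and a level mapping (forward), extract $I$ and $\sigma$ from the level bits (backward), and use Rules~(\ref{asp:minimality})--(\ref{asp:minimalityproven2}) to pin down uniqueness. However, two steps as written do not hold up. First, in the forward direction you start from ``a level mapping~$\sigma$'' supplied by the characterization in Section~\ref{sec:background}. An arbitrary proving level mapping will not do: the constructed $I'$ must also satisfy the constraints~(\ref{asp:minimalityproven}) and~(\ref{asp:minimalityproven2}), which forbid precisely any level that could be decremented while retaining provability. So you must start from the \emph{unique minimal} proving level mapping (the paper invokes its existence and uniqueness, citing Janhunen); with a non-minimal $\sigma$ your candidate $I'$ is not even a model of~$\Pi'$, and the ``exactly one'' claim has no witness.

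Second, and more substantively, your uniqueness mechanism is partly wrong. You claim spurious \emph{larger} levels are excluded because ``unnecessarily large levels would make some $b_x^j$ unfounded'' under stable-model minimality of~$I'$. This cannot work: the level bits are introduced by choice rules~(\ref{asp:guessx2}), which in a HCF program stand for a disjunctive rule with a fresh companion atom, so \emph{every} guess of the bits is founded and stability prunes nothing. The actual division of labor is the reverse of what you state: levels that are too \emph{large} are excluded by the explicit decrement constraints~(\ref{asp:minimalityproven})/(\ref{asp:minimalityproven2}) (if some atom's level exceeds its minimal value, pick such an atom of minimal canonical level; the rule proving it canonically makes an instance of these constraints fire), whereas levels that are too \emph{small} are excluded by provability: no rule can then prove~$x$, so no $p_t^x$ is derivable via Rules~(\ref{asp:checkfirst})/(\ref{asp:checkfirst2}), and Rules~(\ref{asp:prove})--(\ref{asp:checkremove2}) together with TD connectedness rule out~$I'$. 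Once these roles are assigned correctly, the ``unique fixed point'' you ask for is exactly the unique minimal level mapping per SCC, and your componentwise assembly goes through as in the paper's proof.
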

\begin{proof}
``$\Longrightarrow$'': Given any answer set~$I$ of~$\Pi$. Then, there exists a unique~\cite{Janhunen06}, 
minimal level mapping~$\sigma$ proving each~$x\in I$ with~$0\leq \sigma(x) < \ell_{\scc(x)}$.
Let~$P\eqdef\{p_t^x, p_{\leq t}^x \mid r\in\Pi_t \text{ proves } x \text{ with }\sigma, x\in I, t \text{ in }T\}$.
From this we construct an interpretation~$I'\eqdef I\cup\{b^j_x \mid \bval{\sigma(x)}{j} = 1, 0 \leq j \leq \ceil{\log(\ell_{\scc(x)})}, x\in I\}\cup P \cup \{ p_{\leq t}^x \mid x\in I, t' \in T, t' \text{ is below }t\text{ in }T, p_{\leq t'}^x\in P\}$, which sets atoms as~$I$ and additionally encodes~$\sigma$ in binary
and sets provability accordingly. It is easy to see that~$I'$ is an answer set of~$\Pi'$.
``$\Longleftarrow$'': Given any answer set~$I'$ of~$\Pi'$.
From this we construct~$I\eqdef I'\cap\at(\Pi)$ as well as level 
mapping~$\sigma \eqdef \{x\mapsto f_I(x) \mid x\in\at(\Pi)\}$, where we define function~$f_{I'}(x): \at(\Pi)\rightarrow \{0,\ldots,\ell{-}2\}$ for atom~$x\in\at(\Pi)$ to return~$1\leq 0 < \ell_{\scc(x)}$ if $\{b_x^j \mid 0 \leq j \leq \ceil{\log(\ell_{\scc(x)})}, \bval{i}{j}=1\}= \{b_x^j \in I' \mid 0 \leq j \leq \ceil{\log(\ell_{\scc(x)})}\}$,
i.e., the atoms in answer set~$I'$ binary-encode~$i$ for~$x$.
Assume towards a contradiction that~$I\not\models\Pi$. But then~$I'$ does not satisfy at least one instance of Rules~(\ref{asp:guessx}) and~(\ref{asp:check}), contradicting that~$I'$ is an answer set of~$\Pi'$.
Again, towards a contradiction assume that~$I$ is not an answer set of~$\Pi$,
i.e., at least one~$x\in\at(\Pi)$ cannot be proven with~$\sigma$.
Then, we still have~$p_{\leq n}^x\in I'$ for~$n=\rootOf(T)$, by Rules~(\ref{asp:checkremove}) and~(\ref{asp:checkremove2}).
However, then we either have that~$p_{\leq t}^x\in I'$ or~$p_{n}^x\in I'$ by Rules~(\ref{asp:prove}) and~(\ref{asp:prove2}) for at least one child node~$t$ of~$n$.
Finally, by the connectedness property (iii) of the definition of TDs, we have that there has to be a
node~$t'$ that is either~$n$ or a descendant of~$n$ where we have~$p_{t'}^x\in I'$.
Consequently, by Rules~(\ref{asp:checkfirst}) and~(\ref{asp:checkfirst2}) as well as auxiliary Rules~(\ref{asp:guessx2}) and~(\ref{asp:auxbinary}) we have that there is a rule~$r\in \Pi$ that proves~$x$ with~$\sigma$, contradicting the assumption.
Similarly, one can show that Rules~(\ref{asp:minimality}) and~(\ref{asp:minimalityproven}),(\ref{asp:minimalityproven2}) ensure minimality of~$\sigma$.
\end{proof}

\begin{lemma}[Treewidth-Awareness]\label{thm:red:twaware}
Let~$\Pi$ be a HCF program, where the treewidth of~$G_\Pi$ is at most~$k$ and where
every SCC~$C$ satisfies $|C|{+}1\leq \ell$. 
Then, the treewidth of tight program~$\Pi'$ obtained by the reduction above by using~$\Pi$ and a tree decomposition~$\mathcal{T}=(T,\chi)$ of primal graph $G_\Pi$, is in~$\mathcal{O}(k\cdot \log(\ell))$.
\end{lemma}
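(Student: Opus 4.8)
The plan is to construct a tree decomposition $\mathcal{T}' = (T',\chi')$ of the primal graph $G_{\Pi'}$ of the reduced program directly from the given decomposition $\mathcal{T}=(T,\chi)$ of $G_\Pi$, and to bound every bag of $\mathcal{T}'$ by $\mathcal{O}(k\cdot\log(\ell))$. First I would assume, without loss of generality, that $\mathcal{T}$ is nice (Section~\ref{sec:background}), so that every node has at most two children; this keeps the propagation and removal Rules~(\ref{asp:prove})--(\ref{asp:checkremove2}) local. The central observation is that every rule of $\Pi'$ is generated ``at'' a single node $t$ of $T$, and each atom it mentions is tied to $\chi(t)$: the original atoms lie in $\chi(t)$ (since $\at(r)\subseteq\chi(t)$ for $r\in\Pi_t$, and the guesses and bits range over $x\in\chi(t)$), the provability atoms $p_t^x, p_{\leq t}^x$ are indexed by $t$ or by an adjacent node, and the level bits $b_x^j$ and comparison atoms $x\prec i$ are attached to atoms $x\in\chi(t)$.

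Concretely, I would replace each node $t$ of $T$ by a small gadget (a path of auxiliary nodes) in $T'$ and let each bag of that gadget contain a fixed \emph{core}
\[
\mathrm{core}(t) = \chi(t) \cup \{\, b_x^j \mid x\in\chi(t),\, 1\le j\le \lceil \log(\ell_{\scc(x)}) \rceil \,\} \cup P(t),
\]
where $P(t)$ gathers $p_t^x, p_{\leq t}^x$ for $x\in\chi(t)$ and the atoms $p_{\leq t'}^x$ of the (at most two) children $t'$; at a forget node I additionally keep the forgotten atom together with its bits and its propagated provability atom, as required by Rules~(\ref{asp:checkremove})--(\ref{asp:checkremove2}). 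Since $|\chi(t)|\le k+1$, each atom carries at most $\lceil \log(\ell) \rceil$ bits, and these provability additions number $\mathcal{O}(k)$, we obtain $|\mathrm{core}(t)| = \mathcal{O}(k\cdot\log(\ell))$. The only atoms too numerous to fit a single bag are the comparison atoms $x\prec i$, of which there are up to $\ell_{\scc(x)}-1$ per atom $x$. Here I would exploit that, for a fixed $x$, the atoms $x\prec 1, x\prec 2,\ldots$ never co-occur in a rule (Rules~(\ref{asp:auxbinary}), (\ref{asp:checkfirst}), (\ref{asp:minimalityproven}) each reference a single value $i$) and hence form an independent set in $G_{\Pi'}$. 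Consequently the gadget for $t$ can be a path whose $i$-th node has bag $\mathrm{core}(t)\cup\{\, y\prec i \mid y\in\chi(t),\, i<\ell_{\scc(y)} \,\}$, adding at most $k+1$ comparison atoms, so every bag of $\mathcal{T}'$ stays within $\mathcal{O}(k\cdot\log(\ell))$.

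To finish I would verify the three decomposition properties, checking edge-coverage rule type by rule type: the edges among $\chi(t)$, bits, and provability atoms (Rules~(\ref{asp:guessx})--(\ref{asp:check}), (\ref{asp:guessx2}), (\ref{asp:prove})--(\ref{asp:checkremove2})) lie inside $\mathrm{core}(t)$, whereas the edges introduced by the comparison Rules~(\ref{asp:auxbinary}), (\ref{asp:checkfirst}), (\ref{asp:minimalityproven}) for a fixed $i$ are covered by the $i$-th gadget bag, which already holds the corresponding bits (in the core) together with the atoms $y\prec i$. Connectedness is inherited from $\mathcal{T}$ for the core atoms, since every $b_x^j$ and every provability atom tracks an original atom whose occurrence set is already a connected subtree of $T$. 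The delicate point, which I expect to be the main obstacle, is the connectedness of the comparison atoms $x\prec i$: because each gadget spreads the values $i$ over distinct sub-nodes, I must ensure that no $x\prec i$ is forced into the gadgets of two different original nodes at once. I would resolve this by treating each comparison atom as local to the node $t$ at which its defining and consuming Rules~(\ref{asp:auxbinary}), (\ref{asp:checkfirst}), (\ref{asp:minimalityproven}) are generated, consistent with the node-indexed notation $\bvali{x}{t}{i}$, so that its occurrences never leave a single gadget and connectedness holds trivially. With this in place, the per-bag bound $|\chi'(\cdot)| = \mathcal{O}(k\cdot\log(\ell))$ established above yields $\width(\mathcal{T}') = \mathcal{O}(k\cdot\log(\ell))$, as claimed.
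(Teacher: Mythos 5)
Your construction follows the same skeleton as the paper's own proof: keep the tree of~$\mathcal{T}$ and augment each bag~$\chi(t)$ with the level bits~$b_x^j$ and the provability atoms~$p_t^x$, $p_{\leq t}^x$ (plus those of neighbouring nodes) for~$x\in\chi(t)$, which costs~$\mathcal{O}(k\cdot\log(\ell))$ per bag. Where you differ is in what you do about the comparison atoms~$x\prec i$, and this difference is substantive: the paper's proof sketch stops at the augmented bags and asserts that every instance of Rules~(\ref{asp:guessx})--(\ref{asp:minimalityproven2}) is covered, but the atoms~$x\prec i$ occurring in Rules~(\ref{asp:auxbinary}), (\ref{asp:checkfirst}) and~(\ref{asp:minimalityproven}) are not placed in \emph{any} bag of that construction, so edge coverage fails for exactly those rules. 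Your two additional ideas repair this: (i) every single rule of~$\Pi'$ mentions comparison atoms of one index value only, so the index layers can be spread over a path of gadget bags, each adding at most~$|\chi(t)|$ atoms to the core; and (ii) the comparison atoms must be treated as local to the node at which they are generated, so that each one lives in a single gadget bag and connectedness is trivial.

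Point (ii) is not merely a convenience, and you are right to flag it as the delicate step: with a single global atom~$x\prec i$ shared across nodes, the lemma's bound is actually violated. Take the HCF program with one SCC~$\{y,x_1,\ldots,x_{\ell-2}\}$ given by the rules~$x_j\hsep y$ and~$y\hsep x_j$ for~$1\leq j\leq\ell-2$; its primal graph is a star, so~$k$ is constant, yet Rules~(\ref{asp:checkfirst}) make every global atom~$y\prec i$ adjacent to every~$x_j$, so~$G_{\Pi'}$ contains a complete bipartite graph~$K_{\ell-2,\ell-2}$ and thus has treewidth~$\Omega(\ell)$, far above~$\mathcal{O}(k\cdot\log(\ell))$. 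Localizing the comparison atoms per node (which preserves correctness, since each copy is defined from the global bits by the same negative-body rules) removes these long-range adjacencies and makes your per-bag bound, and hence the lemma, go through. In short, your proof is correct, takes the paper's route, and is in fact more complete than the paper's own sketch, which silently glosses over the atoms~$x\prec i$.
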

\begin{proof}[Proof (Sketch)]
We take~$\mathcal{T}=(T,\chi)$ and construct a TD~$\mathcal{T}=(T,\chi')$ of~$G_{\Pi'}$,
where~$\chi'$ is defined as follows.
For every node~$t$ of~$T$, whose parent node is~$t^*$, we let~$\chi'(t)\eqdef \chi(t)\cup\{b_x^j\mid x\in\chi(t), 0 \leq j \leq \ceil{\log(\ell_{\scc(x)})}\} \cup \{p_t^x, p_{\leq t}^x, p_{\leq t^*} \mid x\in\chi(t)\}$.
It is easy to see that indeed all atoms of every instance of Rules~(\ref{asp:guessx})--(\ref{asp:minimalityproven2}) appear in at least one common bag of~$\chi'$. Further, we also have connectedness of~$\mathcal{T}'$, i.e., $\mathcal{T}'$ is indeed a well-defined TD of~$G_{\Pi'}$
and~$\Card{\chi(t)}$ in $\mathcal{O}(k\cdot\log(\ell))$.
\end{proof}

\longversion{
\begin{lemma}[Runtime]\label{thm:red:runtime}
Let~$\Pi$ be a HCF program, where the treewidth of~$G_\Pi$ is at most~$k$ and where
every SCC~$C$ satisfies $|C|{+}1\leq \ell$. 
Then, for a given tree decomposition~$\mathcal{T}$ of primal graph $G_\Pi$, the reduction above on~$\Pi$ and~$\mathcal{T}$ runs in time~$\mathcal{O}(k \cdot \ceil{\log(\ell)}\cdot \poly(\Card{\at(\Pi)})).$
\end{lemma}}

Finally, we are in the position to prove Theorem~\ref{thm:asptotight} by combining both lemmas.

\begin{restatetheorem}[thm:asptotight]
\begin{theorem}[Removing Cyclicity of SCC-bounded \ASP]
Let~$\Pi$ be a HCF program, where the treewidth of~$G_\Pi$ is at most~$k$ and where
every SCC~$C$ satisfies $|C|{+}1\leq \ell$. 
Then, there is a tight program~$\Pi'$ with treewidth in~$\mathcal{O}(k\cdot\log(\ell))$ such that 
the stable models of~$\Pi$ and~$\Pi'$ projected to the atoms of~$\Pi$ coincide.
\end{theorem}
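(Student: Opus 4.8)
The plan is to obtain $\Pi'$ by applying the reduction of Rules~(\ref{asp:guessx})--(\ref{asp:minimalityproven2}) to $\Pi$ together with a tree decomposition $\mathcal{T}=(T,\chi)$ of $G_\Pi$ of width at most $k$ (one exists since $\tw{G_\Pi}\le k$), and then to assemble the three properties claimed of $\Pi'$ — that it is tight, that its stable models project onto those of $\Pi$, and that its treewidth is $\mathcal{O}(k\cdot\log(\ell))$ — from the construction and the two preceding lemmas. Concretely, I would first argue tightness, then invoke Lemma~\ref{thm:red:correct} for the answer-set correspondence and Lemma~\ref{thm:red:twaware} for the width bound.

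First I would verify that $\Pi'$ is tight, i.e.\ that its positive dependency graph $D_{\Pi'}$ is acyclic. This is seen by inspecting which rules contribute positive edges. The original atoms $x$, the level bits $b_x^j$, and the auxiliary atoms $x\prec i$ occur only as heads of the choice Rules~(\ref{asp:guessx}),~(\ref{asp:guessx2}) or of Rules~(\ref{asp:auxbinary}), all of which have empty or purely negative positive bodies, so these atoms carry no incoming positive edge. The provability atoms $p_t^x$ receive positive edges only from such source atoms via Rules~(\ref{asp:checkfirst}),(\ref{asp:checkfirst2}), and the atoms $p_{\leq t}^x$ only from $p_t^x$ and from $p_{\leq t'}^x$ for children $t'$ of $t$ via Rules~(\ref{asp:prove}),(\ref{asp:prove2}). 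Since the latter edges strictly follow $T$ upward from descendants to ancestors, $D_{\Pi'}$ is a DAG and $\Pi'$ is tight, as required by the statement.

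With tightness in hand, the remaining two properties follow directly. Lemma~\ref{thm:red:correct} gives a one-to-one correspondence between the answer sets of $\Pi$ and those of $\Pi'$ under which each answer set of $\Pi'$ restricts on $\at(\Pi)$ to the corresponding answer set of $\Pi$, so the two families of stable models projected to $\at(\Pi)$ coincide. Lemma~\ref{thm:red:twaware} supplies a tree decomposition of $G_{\Pi'}$ of width $\mathcal{O}(k\cdot\log(\ell))$, whence $\tw{G_{\Pi'}}\in\mathcal{O}(k\cdot\log(\ell))$. Combining these three facts yields the theorem.

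I expect the combination step itself to be routine; the genuine difficulty resides in the two lemmas it assumes. For correctness the delicate point is that the reduction encodes a single \emph{global} level mapping in binary through the level bits $b_x^j$ and reproduces the SCC-relativized provability characterization of Section~\ref{sec:background} faithfully across bag boundaries, so that provability propagated by the atoms $p_{\leq t}^x$ up to $\rootOf(T)$ coincides with genuine provability in $\Pi$, while the minimality Rules~(\ref{asp:minimality})--(\ref{asp:minimalityproven2}) force the correspondence to be one-to-one rather than one-to-many. For treewidth-awareness the crux is that each bag $\chi(t)$ grows only by the $\ceil{\log(\ell_{\scc(x)})}$ level bits and the constantly many provability atoms per $x\in\chi(t)$, giving the multiplicative $\log(\ell)$ blow-up while connectedness of the decomposition is preserved. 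Thus the main obstacle is not the final combination but ensuring the reduction is simultaneously correct and TD-guided, which is precisely what the two lemmas discharge.
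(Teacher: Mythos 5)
Your proposal is correct and takes essentially the same approach as the paper's own proof: apply the TD-guided reduction of Rules~(\ref{asp:guessx})--(\ref{asp:minimalityproven2}) to $\Pi$ and a tree decomposition of $G_\Pi$, then combine Lemma~\ref{thm:red:correct} (correctness of the answer-set correspondence) with Lemma~\ref{thm:red:twaware} (the $\mathcal{O}(k\cdot\log(\ell))$ width bound). Your explicit verification that $D_{\Pi'}$ is acyclic --- and hence that $\Pi'$ is indeed tight --- fills in a detail the paper leaves implicit; the only other difference is that the paper computes a 5-approximate decomposition (relevant for algorithmic efficiency, since optimal treewidth is hard to compute) where you take an arbitrary width-$k$ one, which is immaterial for the pure existence statement.
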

\end{restatetheorem}
\begin{proof}
First, we compute a tree decomposition~$\mathcal{T}=(T,\chi)$
of~$G_\Pi$ that is a 
5-approximation of~$k=\tw{G_\Pi}$ in time~$2^{\mathcal{O}(k)}\cdot \poly(\Card{\at(\Pi)})$.
Observe that the reduction consisting of Rules~(\ref{asp:guessx})--(\ref{asp:minimalityproven2}) on~$\Pi$ and~$\mathcal{T}$
runs in polynomial time, precisely in time~$\mathcal{O}(k \cdot \log(\ell)\cdot \poly(\Card{\at(\Pi)}))$.
The claim follows by correctness (Lemma~\ref{thm:red:correct}) and by treewidth-awareness as given by Lemma~\ref{thm:red:twaware}.
\end{proof}

Having established Theorem~\ref{thm:asptotight}, the reduction above easily allows for an alternative proof of Theorem~\ref{theo:runtimescc}.
Instead of Algorithm~\PRIM of Listing~\ref{fig:prim}, one could also compile
the resulting tight program of the reduction above to a propositional formula (\SAT),
and use an existing algorithm for \SAT to decide satisfiability.
Indeed, such algorithms run in time single-exponential in the treewidth~\cite{SamerSzeider10b} and we end up with similar worst-case running times as given by Theorem~\ref{theo:runtimescc}.

\subsection{Reduction to \SAT}
Having established the reduction of SCC-bounded \ASP to tight \ASP,
we now present a treewidth-aware reduction of tight \ASP to \SAT,
which together allow to reduce from SCC-bounded \ASP to \SAT.
While the step from tight \ASP to \SAT might seem straightforward for the program~$\Pi'$
obtained by the reduction above, in general it is not guaranteed
that existing reductions, e.g.~\cite{Fages94,LinZhao03,Janhunen06}, do not cause a significant blowup
in the treewidth of the resulting propositional formula.
Indeed, one needs to take care and define a treewidth-aware reduction.

\futuresketch{
\begin{example}
Existing reductions screw up.
\end{example}}

Let~$\Pi$ be any given tight logic program and~$\mathcal{T}=(T,\chi)$
be a tree decomposition of~$G_\Pi$.
Similar to the reduction from SCC-bounded \ASP to tight \ASP,
we use as variables besides the original atoms of~$\Pi$ also auxiliary variables.
In order to preserve treewidth, we still need to guide the evaluation of the provability of an atom~$x\in\at(\Pi)$ in a node~$t$ in~$T$ along the TD~$\mathcal{T}$, whereby we use atoms~$p_t^x$ and~$p_{\leq t}^x$ to indicate that~$x$ was proven in node~$t$ and below~$t$, respectively.
However, we do not need any level mappings, since there is no positive cycle in~$\Pi$,
but we still guide the idea of Clark's completion~\cite{Clark77} along TD~$\mathcal{T}$.
Consequently, we construct the following propositional formula, where for each node~$t$ of~$T$ we add Formulas~(\ref{red:checkrules})--(\ref{red:checkfirst}).
Intuitively, Formulas~(\ref{red:checkrules}) ensure that all rules are satisfied, cf., Rules~(\ref{asp:check}).
Formulas~(\ref{red:checkremove}) and~(\ref{red:checkremove2}) take care that ultimately an
atom that is set to true requires to be proven, similar to Rules~(\ref{asp:checkremove}) and~(\ref{asp:checkremove2}).
Finally, Formulas~(\ref{red:checkfirst}) and (\ref{red:check}) provide the definition
for an atom to be proven in a node and below a node, respectively, which is similar to
Rules~(\ref{asp:checkfirst})--(\ref{asp:prove2}), but without the level mappings.

\emph{Preserving answer sets:} Answer sets are already preserved, i.e., we obtain exactly one model of the resulting propositional formula~$F$ for each answer set of~$\Pi$ and vice versa. If the equivalence ($\leftrightarrow$) in Formulas~(\ref{red:checkfirst}) and (\ref{red:check}) is replaced
by an implication ($\rightarrow$), we might get duplicate models for one answer set while still ensuring preservation of consistency, i.e., the answers to both decision problems coincide.

{
\vspace{-1em}
\begin{align}
	\label{red:checkrules}&\bigvee_{a\in B_r^+} \neg a \vee \bigvee_{a\in B_r^- \cup H_r} a &&{\text{for each } r\in \Pi_t}\\
	\label{red:checkremove}&x\rightarrow p^{x}_{\leq t'}&&{\text{for each }{t'\in\children(t)},}\notag\\[-.45em]
	&&& x\in\chi(t')\setminus\chi(t)\\
	\label{red:checkremove2}&x \rightarrow p^{x}_{\leq n}&&{\text{for each }x\in\chi(n),}\notag\\[-.35em]&&&n=\rootOf(T)\\ 
%
	\label{red:checkfirst}&p_{t}^x \leftrightarrow \bigvee_{r\in\Pi_t, x \in H_r} (\hspace{-0.15em}\bigwedge_{a\in B_r^+}\hspace{-.5em}a \wedge x  \wedge\hspace{-1.75em} \bigwedge_{b\in B_r^- \cup (H_r \setminus \{x\})}\hspace{-2.25em}\neg b) &&{\text{for each } x\in \chi(t)}\\
	\label{red:check}&p^x_{\leq t} \leftrightarrow p^x_t \vee (\hspace{-1em}\bigvee_{t' \in \children(t), x\in\chi(t')} \hspace{-2em} p^x_{\leq t'})&&{\text{for each } x\in \chi(t)}
\end{align}
\vspace{-.05em}
}
\longversion{
\begin{example}
also mention that in our case, obviously, the completion is already done.
in other words, while in general this is not so obvious, in our case this part has already been done by the reduction to tight asp.
\end{example}}	

\subsubsection*{Correctness and Treewidth-Awareness.} 

Conceptually the proofs of the next two Lemmas~\ref{thm:red2:correct} and~\ref{thm:red2:twaware}
proceed rather similar to the proofs of Lemmas~\ref{thm:red:correct} and~\ref{thm:red:twaware}, but without
the level mappings, respectively.

\begin{lemma}[Correctness]\label{thm:red2:correct}
Let~$\Pi$ be a tight logic program, where the treewidth of~$G_\Pi$ is at most~$k$.
Then, the propositional formula~$F$ obtained by the reduction above on~$\Pi$ and a tree decomposition~$\mathcal{T}$ of primal graph $G_\Pi$, consisting of Formulas~(\ref{red:checkrules})--(\ref{red:check}), is correct.
Formally, for any answer set~$I$ of~$\Pi$ there is exactly one satisfying assignment of~$F$ and vice versa.
\end{lemma}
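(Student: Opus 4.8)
The plan is to exhibit a bijection between the answer sets of~$\Pi$ and the satisfying assignments of~$F$, following the structure of the proof of Lemma~\ref{thm:red:correct} but dropping everything related to level mappings. The decisive simplification is that~$\Pi$ is tight: every SCC of~$D_\Pi$ is a singleton, so for every rule~$r$ and head atom~$x\in H_r$ we have~$B_r^+\cap\scc(x)=\emptyset$, and the level condition (iii) of the answer-set characterization from Section~\ref{sec:background} is vacuously true. Hence an atom~$x\in I$ is proven exactly when some rule~$r\in\Pi$ with~$x\in H_r$ satisfies~$B_r^+\subseteq I$, $B_r^-\cap I=\emptyset$ and~$(H_r\setminus\{x\})\cap I=\emptyset$; this is precisely the body condition encoded in Formulas~(\ref{red:checkfirst}). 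A second key observation, which yields the ``exactly one'' part essentially for free, is that the auxiliary variables~$p_t^x$ and~$p_{\leq t}^x$ are \emph{functionally determined} by the original atoms: since~$F$ contains the biconditionals~(\ref{red:checkfirst}) and~(\ref{red:check}), and these define~$p_t^x$ purely from original atoms and~$p_{\leq t}^x$ from~$p_t^x$ together with the children variables~$p_{\leq t'}^x$, a straightforward bottom-up induction over~$T$ shows that any satisfying assignment is uniquely fixed on the~$p$-variables once its restriction to~$\at(\Pi)$ is fixed.

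For the direction ``$\Longrightarrow$'', I would take an answer set~$I$, set the original atoms according to~$I$, and define the~$p$-variables by the forced values just described. Formulas~(\ref{red:checkrules}) hold because~$I\models\Pi$, and Formulas~(\ref{red:checkfirst}) and~(\ref{red:check}) hold by construction. The remaining work is to verify Formulas~(\ref{red:checkremove}) and~(\ref{red:checkremove2}): for each~$x\in I$ forgotten at the parent of a node~$t'$ (or present at the root~$n=\rootOf(T)$), I must show~$p_{\leq t'}^x$ (resp.\ $p_{\leq n}^x$) is true. Since~$I$ is an answer set, $x$ is proven by some rule~$r$, and by connectedness of the TD all atoms of~$r$ (which include~$x$) occur together in some bag; as~$x$ no longer occurs above~$t'$, that bag lies at or below~$t'$, so~$p_s^x$ holds for the corresponding node~$s$ and propagates upward via~(\ref{red:check}) to~$p_{\leq t'}^x$.

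For the direction ``$\Longleftarrow$'', I would take a satisfying assignment~$M$, set~$I\eqdef M\cap\at(\Pi)$, and argue that~$I$ is an answer set. Modelhood~$I\models\Pi$ follows from Formulas~(\ref{red:checkrules}) together with the fact that, by the TD, every rule of~$\Pi$ lies in some~$\Pi_t$. For provability, each~$x\in I$ is eventually subject to a forget-constraint~(\ref{red:checkremove}) or the root constraint~(\ref{red:checkremove2}), forcing~$p_{\leq t}^x$; unwinding the biconditionals~(\ref{red:check}) downward yields a node~$s$ with~$p_s^x$ true, and~(\ref{red:checkfirst}) then exhibits a rule in~$\Pi_s$ witnessing that~$x$ is proven in~$I$. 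Thus every true atom is supported, and because~$\Pi$ is tight this supportedness coincides with being an answer set. The main obstacle is exactly this last interplay: one must carefully combine the connectedness property of the tree decomposition with tightness to be sure that the ``proven below'' bookkeeping of the~$p_{\leq t}^x$ faithfully captures genuine provability without admitting self-supporting loops---loops that only the absence of positive cycles rules out. Together with the uniqueness of the~$p$-extension, these two directions give the claimed bijection.
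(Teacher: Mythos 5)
Your proposal is correct and takes essentially the same route as the paper: the paper itself only remarks that the proof proceeds as for Lemma~\ref{thm:red:correct} but with the level mappings dropped, and your argument---using tightness to collapse provability to supportedness, and the biconditionals~(\ref{red:checkfirst}) and~(\ref{red:check}) to functionally determine the auxiliary variables from the original atoms---is precisely that sketch worked out in detail. Your uniqueness observation also matches the paper's own remark that the equivalences (rather than implications) are what guarantee exactly one model per answer set.
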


\begin{lemma}[Treewidth-Awareness]\label{thm:red2:twaware}
Let~$\Pi$ be a tight logic program, where the treewidth of~$G_\Pi$ is at most~$k$.
Then, the treewidth of propositional formula~$F$ obtained by the reduction above by using~$\Pi$ and a tree decomposition~$\mathcal{T}$ of primal graph $G_\Pi$, is in~$\mathcal{O}(k)$.
\end{lemma}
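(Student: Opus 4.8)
The plan is to mirror the proof of Lemma~\ref{thm:red:twaware}: starting from the given tree decomposition $\mathcal{T}=(T,\chi)$ of $G_\Pi$, which I may assume to be nice (so that every node has at most two children, which costs nothing in width), I would build a tree decomposition $\mathcal{T}'=(T,\chi')$ of the primal graph $G_F$ on the \emph{same} tree $T$, augmenting each bag only by the auxiliary provability atoms. The decisive difference from Lemma~\ref{thm:red:twaware} is that the tight case carries \emph{no} level bits $b_x^j$; hence each bag grows only by a constant factor rather than by a factor of $\ceil{\log(\ell)}$, which is exactly what yields the improved bound $\mathcal{O}(k)$ instead of $\mathcal{O}(k\cdot\log(\ell))$.

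Concretely, for every node $t$ with parent $t^*$ I would set
\[
  \chi'(t)\eqdef \chi(t)\cup\{p_t^x,\,p_{\leq t}^x\mid x\in\chi(t)\}\cup\{p_{\leq t'}^x\mid t'\in\children(t),\,x\in\chi(t)\cap\chi(t')\}.
\]
The last set is needed because Formula~(\ref{red:check}) is a single biconditional that simultaneously mentions $p_{\leq t}^x$, $p_t^x$, and the child atoms $p_{\leq t'}^x$ of \emph{all} children $t'$ of $t$; to realize the corresponding clique of $G_F$ inside one bag I collect those child atoms into $\chi'(t)$. Since $T$ is nice, $\Card{\children(t)}\leq 2$, so $\Card{\chi'(t)}\leq 5\cdot\Card{\chi(t)}\leq 5(k+1)$, giving $\width(\mathcal{T}')\in\mathcal{O}(k)$.

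Then I would verify the three defining properties of a tree decomposition of $G_F$. Coverage is immediate: every original atom lies in some $\chi(t)\subseteq\chi'(t)$, and each auxiliary atom $p_t^x,p_{\leq t}^x$ is placed in $\chi'(t)$. For edge coverage I would argue clause by clause: the variables of Formulas~(\ref{red:checkrules}) and~(\ref{red:checkfirst}) involve only $\at(r)\subseteq\chi(t)$ together with $x$ and $p_t^x$, all in $\chi'(t)$; Formula~(\ref{red:check}) is covered by $\chi'(t)$ by construction; Formula~(\ref{red:checkremove2}) is covered at the root $n=\rootOf(T)$; and Formula~(\ref{red:checkremove}), which pairs a forgotten atom $x\in\chi(t')\setminus\chi(t)$ with $p_{\leq t'}^x$, is covered not at $t$ but in the \emph{child} bag $\chi'(t')$, where both $x$ and $p_{\leq t'}^x$ reside. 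For connectedness, the original atoms inherit connectedness from $\mathcal{T}$ since $x\in\chi'(t)$ iff $x\in\chi(t)$; each $p_t^x$ occurs only in $\chi'(t)$; and each $p_{\leq t}^x$ occurs exactly in $\chi'(t)$ and, when $x\in\chi(t^*)$, in the parent bag $\chi'(t^*)$, i.e.\ in a connected, edge-sized subtree.

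I expect the main obstacle to be the bookkeeping for the propagation atoms $p_{\leq t}^x$: unlike the per-child, binary propagation Rules~(\ref{asp:prove2}) of the ASP reduction, here Formula~(\ref{red:check}) aggregates all children at once, so one must both justify restricting to a nice (bounded-degree) decomposition and confirm that placing a child's $p_{\leq t'}^x$ into its parent's bag does not break connectedness, i.e.\ that $p_{\leq t}^x$ never surfaces in a bag outside $\{t,t^*\}$. Once this is pinned down, the width bound and hence $\tw{G_F}\in\mathcal{O}(k)$ follow directly.
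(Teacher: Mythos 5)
Your proposal is correct and takes essentially the same approach as the paper: build a tree decomposition of $G_F$ on the same tree by augmenting each bag with the provability atoms, and cope with the fact that Formula~(\ref{red:check}) mentions all children at once by assuming (without loss of generality, via a nice decomposition) that every node has constantly many children. Your explicit bag construction and the $5(k+1)$ width bound simply flesh out the details that the paper's proof of Lemma~\ref{thm:red2:twaware} leaves implicit by deferring to Lemma~\ref{thm:red:twaware}.
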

\begin{proof}
The proof proceeds similar to Lemma~\ref{thm:red:twaware}.
However, due to Formulas~(\ref{red:check}) and
without loss of generality one needs to consider only TDs, where every node has a constant number of child nodes. Such a TD can be easily obtained from any given TD by adding auxiliary nodes accordingly~\cite{Kloks94a}.
\end{proof}

\longversion{
\begin{lemma}[Runtime]\label{thm:red2:runtime}
Let~$\Pi$ be a tight logic program, where the treewidth of~$G_\Pi$ is at most~$k$.
Then, for a given tree decomposition~$\mathcal{T}$ of primal graph $G_\Pi$, the reduction above on~$\Pi$ and~$\mathcal{T}$ runs in time~$\mathcal{O}(k \cdot \poly(\Card{\at(\Pi)})).$
\end{lemma}}

However, we cannot do much better, as given by the following proposition.

\begin{proposition}[ETH-Tightness]\label{thm:red2:lowerbound}
Let~$\Pi$ be a tight logic program, where the treewidth of~$G_\Pi$ is at most~$k$.
Then, under ETH, the treewidth of the resulting propositional formula~$F$ can not be significantly improved, i.e., under ETH there is no reduction running in time~$2^{o(k)}\cdot\poly(\Card{\at(\Pi)})$ such that~$\tw{G_F}$ is in~$o(k)$.
\end{proposition}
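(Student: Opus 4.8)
The plan is to prove this lower bound by contradiction, by showing that tight \ASP is, in terms of treewidth, at least as hard as \SAT, and then composing a hypothetical ``too good'' reduction with the known treewidth-based \SAT algorithm. Recall from the introduction that, under ETH~\cite{ImpagliazzoPaturiZane01}, \SAT cannot be decided in time~$2^{o(k)}\cdot\poly(n)$ with~$k=\tw{G_F}$, whereas \SAT \emph{can} be decided in time~$2^{\mathcal{O}(k)}\cdot\poly(n)$~\cite{SamerSzeider10b}. If some tight-\ASP-to-\SAT reduction ran in time~$2^{o(k)}\cdot\poly$ and produced a formula of treewidth~$o(k)$, chaining these facts would give a \SAT algorithm violating the first statement.

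First I would exhibit a simple treewidth-preserving reduction from \SAT to tight \ASP. Given~$F$ (viewed as a set of clauses) with~$\tw{G_F}\leq k$, build a program~$\Pi_F$ as follows: for each variable~$x$ add a choice rule~$\{x\}\hsep$ (i.e., a disjunctive rule~$x\por x'\hsep$ with a fresh atom~$x'$); and for each clause~$c=\ell_1\por\cdots\por\ell_m$ add the constraint~$\hsep\tilde\ell_1,\ldots,\tilde\ell_m$, where~$\tilde\ell$ is the atom of the complementary literal (a positive literal~$x$ contributes~$\neg x$, a negative literal~$\neg x$ contributes~$x$). The answer sets of~$\Pi_F$ restricted to the variables of~$F$ are exactly its satisfying assignments. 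The key observation is that the positive dependency graph~$D_{\Pi_F}$ has no edges at all, since choice rules have empty positive bodies and constraints have empty heads; hence~$\Pi_F$ is tight. Moreover~$G_{\Pi_F}$ is~$G_F$ plus one pendant atom~$x'$ per variable, so~$\tw{G_{\Pi_F}}\in\mathcal{O}(k)$, and the construction is polynomial (indeed linear).

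Next I would compose. Suppose, towards a contradiction, that there is a reduction~$\rho$ from tight \ASP to \SAT running in time~$2^{o(k)}\cdot\poly(\Card{\at(\Pi)})$ whose output~$F'$ satisfies~$\tw{G_{F'}}\in o(k)$, where~$k$ is the treewidth of the input program. For an arbitrary \SAT instance~$F$ with~$\tw{G_F}=k$, first apply the encoding above to get the tight program~$\Pi_F$ with~$\tw{G_{\Pi_F}}\in\mathcal{O}(k)$ in polynomial time, then apply~$\rho$ to get~$F'$ with~$\tw{G_{F'}}\in o(\tw{G_{\Pi_F}})=o(k)$ in time~$2^{o(k)}\cdot\poly$; since satisfiability is preserved along both steps,~$F$ is satisfiable iff~$F'$ is. Finally decide~$F'$ using the treewidth-based \SAT algorithm~\cite{SamerSzeider10b} in time~$2^{\mathcal{O}(\tw{G_{F'}})}\cdot\poly=2^{o(k)}\cdot\poly$. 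The entire pipeline then decides satisfiability of~$F$ in~$2^{o(k)}\cdot\poly(n)$, contradicting the ETH lower bound for \SAT recalled above.

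The routine parts are the correctness of the encoding and the bookkeeping of composing the subexponential factors. The step I expect to require the most care is verifying that the encoding is genuinely treewidth-preserving \emph{and} tight at the same time: I would argue carefully that the pendant atoms~$x'$ and the empty-head constraints neither introduce any positive cycle (keeping~$D_{\Pi_F}$ edge-free) nor enlarge any bag by more than a constant, so that the~$o(k)$ treewidth guarantee claimed for~$\rho$ really transfers back to an~$o(k)$ treewidth bound on the resulting \SAT instance.
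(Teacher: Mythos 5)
Your proposal is correct and follows essentially the same route as the paper's own proof: encode \SAT into tight \ASP via choice rules per variable and constraints per clause, assume the ``too good'' reduction back to \SAT exists, and chain it with the treewidth-based \SAT algorithm of~\cite{SamerSzeider10b} to contradict ETH. Your additional verification that the encoding keeps the positive dependency graph edge-free and only adds pendant atoms is a welcome elaboration of details the paper leaves implicit.
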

\begin{proof}
First, we reduce \SAT to tight \ASP, i.e., capture all models of a given formula~$F$ in a tight program~$\Pi$. Thereby~$\Pi$ consists of a choice rule for each variable of~$F$ and a constraint for each clause. 
Towards a contradiction assume the contrary of this proposition.
Then, we reduce~$\Pi$ back to a propositional formula~$F'$, running in time~$2^{o(k)}\cdot\poly(\Card{\at(\Pi)})$ with~$\tw{G_{F'}}$ being in~$o(k)$.
Consequently, we use an algorithm for \SAT~\cite{SamerSzeider10b} on~$F'$ to effectively solve~$F$ in time~$2^{o(k)}\cdot\poly(\Card{n})$, where~$F$ has~$n$ variables, which finally contradicts ETH.
\end{proof}

Knowing that under ETH tight \ASP has roughly the same complexity for treewidth as \SAT, we can derive the following corollary that complements the existing lower bound for normal ASP as given by Proposition~\ref{prop:lowerbound}.

\begin{corollary}
Let~$\Pi$ be any normal logic program, where the treewidth of~$G_\Pi$ is at most~$k$.
Then, under ETH, there is no reduction to a tight logic program~$\Pi'$ running in time~$2^{o(k\cdot\log(k))}\cdot\poly(\Card{\at(\Pi)})$ such that~$\tw{G_{\Pi'}}$ is in~$o(k\cdot\log(k))$.
\end{corollary}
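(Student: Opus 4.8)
The plan is to argue by contradiction, chaining the hypothesised reduction with the treewidth-aware reduction from tight \ASP to \SAT established in Lemmas~\ref{thm:red2:correct} and~\ref{thm:red2:twaware}, together with a known treewidth-based \SAT algorithm, so as to violate the lower bound of Proposition~\ref{prop:lowerbound}. First I would assume, towards a contradiction, that for normal programs there is a (consistency-preserving) reduction producing a tight program~$\Pi'$ that runs in time~$2^{o(k\cdot\log(k))}\cdot\poly(\Card{\at(\Pi)})$ and guarantees~$\tw{G_{\Pi'}}\in o(k\cdot\log(k))$. Note that this not only bounds the running time but also forces~$\Card{\at(\Pi')}\le 2^{o(k\cdot\log(k))}\cdot\poly(\Card{\at(\Pi)})$, since the output cannot be larger than the time spent producing it.

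Next I would feed~$\Pi'$ into the treewidth-aware reduction to \SAT of the previous subsection. By Lemma~\ref{thm:red2:correct} the resulting formula~$F$ is satisfiable iff~$\Pi'$ is consistent, hence iff~$\Pi$ is consistent; by Lemma~\ref{thm:red2:twaware} we have~$\tw{G_F}\in\mathcal{O}(\tw{G_{\Pi'}})=o(k\cdot\log(k))$, and this step runs in time polynomial in~$\Card{\at(\Pi')}$. I would then run the treewidth-based \SAT algorithm of~\cite{SamerSzeider10b} on~$F$, deciding satisfiability in time~$2^{\mathcal{O}(\tw{G_F})}\cdot\poly(\Card{\at(F)})=2^{o(k\cdot\log(k))}\cdot\poly(\Card{\at(F)})$. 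Composing the three steps yields a decision procedure for consistency of the normal program~$\Pi$.

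The key bookkeeping step is to verify that the composed running time stays within~$2^{o(k\cdot\log(k))}\cdot\poly(\Card{\at(\Pi)})$. The polynomial blow-ups in~$\Card{\at(\Pi')}$ and~$\Card{\at(F)}$ only contribute a factor~$\poly(2^{o(k\cdot\log(k))}\cdot\poly(\Card{\at(\Pi)}))$, and since a constant power of~$2^{o(k\cdot\log(k))}$ is again~$2^{o(k\cdot\log(k))}$ (a constant multiple of~$o(k\cdot\log(k))$ remains in~$o(k\cdot\log(k))$), the total stays of the form~$2^{o(k\cdot\log(k))}\cdot\poly(\Card{\at(\Pi)})$; likewise the constant-factor treewidth blow-up of the \SAT reduction keeps the exponent in~$o(k\cdot\log(k))$. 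Thus consistency of~$\Pi$ would be decided in time~$2^{o(k\cdot\log(k))}\cdot\poly(\Card{\at(\Pi)})$, contradicting Proposition~\ref{prop:lowerbound}, which proves the corollary.

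The main obstacle I anticipate is precisely this asymptotic accounting: one must read ``reduction producing a tight program'' as a consistency-preserving many-one reduction (so that the final satisfiability test indeed decides consistency of~$\Pi$), and one must ensure that the intermediate size blow-up of~$\Pi'$ does not, after the polynomial \SAT reduction and the single-exponential \SAT solve, push the total exponent out of~$o(k\cdot\log(k))$. Everything else—the correctness and the~$\mathcal{O}(k)$ treewidth bound of the tight-to-\SAT step—is inherited directly from the already-established Lemmas~\ref{thm:red2:correct} and~\ref{thm:red2:twaware}.
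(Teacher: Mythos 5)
Your proposal is correct and follows exactly the derivation the paper intends: the corollary is stated as an immediate consequence of the fact that tight \ASP is solvable in time single-exponential in its treewidth (via Lemmas~\ref{thm:red2:correct} and~\ref{thm:red2:twaware} plus the \SAT algorithm of~\cite{SamerSzeider10b}), so any such reduction would contradict Proposition~\ref{prop:lowerbound}. Your additional bookkeeping (bounding $\Card{\at(\Pi')}$ by the reduction's running time, and noting that constant powers of $2^{o(k\cdot\log(k))}$ stay in $2^{o(k\cdot\log(k))}$) is exactly the detail the paper leaves implicit.
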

%

\futuresketch{
{
\vspace{-1em}
\begin{align}
	\label{red:checkrules}&\bigvee_{a\in B_r^+} \neg a \vee \bigvee_{a\in B_r^- \cup H_r} a &&{\text{for each } r\in \Pi_t}\\
	\label{red:checkremove}&x\rightarrow p^{x}_{\leq t'}&&{\text{for each }{t'\in\children(t)},}\notag\\[-.45em]
	&&& x\in\chi(t')\setminus\chi(t)\\
	\label{red:checkremove2}&x \rightarrow p^{x}_{\leq n}&&{\text{for each }x\in\chi(n),}\notag\\[-.35em]&&&n=\rootOf(T)\\ 
%
	\label{red:check}&p^x_{\leq t} \leftrightarrow p^x_t \vee (\hspace{-1em}\bigvee_{t_o \in \children(t), x\in\chi(t_o)} \hspace{-2em} p^x_{\leq t_o})&&{\text{for each } x\in \chi(t)}\\
	\label{red:checkfirst}&p_{t}^x \leftrightarrow \bigvee_{r\in\Pi_t, x \in H_r} (\hspace{-0.15em}\bigwedge_{a\in B_r^+}\hspace{-.5em}a \wedge x \wedge \bvali{x}{t}{i} \wedge &&{\text{for each } x\in \chi(t), C=\scc(x), 0 \leq i < \ell_C}\notag\\[-.45em]
	&\qquad \bigwedge_{a\in B_r^+\cap C}(a \prec i) \wedge\hspace{-1.5em}\bigwedge_{b\in B_r^- \cup (H_r \setminus \{x\})}\hspace{-1.5em} \neg b)\raisetag{1em}
\end{align}
\vspace{-.05em}
}

This can then be strengthened to ensure that models of the propositional formula are in a bijective (one-to-one) relation with answer sets of the logic program.

{
\vspace{-.25em}
\begin{align}
	\label{red:cnt:ineq}&\neg x \rightarrow \bvali{x}{t}{0}&&{\text{for each } x\in \chi(t)}\raisetag{2.5em}\\
	&\hspace{-1em}\bigwedge_{r\in\Pi_t, x\in H_r, 1 \leq i < \ell_{\scc(x)}}\hspace{-1em}(\bvali{x}{t}{i}\rightarrow 
	\label{red:cnt:checkfirst}&&\hspace{-2em}\bigvee_{a\in B_r^+}\hspace{-.75em}\neg a \vee\hspace{-1em} \bigvee_{a\in B_r^+\cap C}{\hspace{-1.25em}(a \not\prec i-1)}\vee\hspace{-2.5em}\bigvee_{b\in B_r^- \cup (H_r \setminus \{x\})}\hspace{-2.5em}b)\notag\\ 
	&
&&{\text{for each } x\in \chi(t), C=\scc(x),  0 \leq i < \ell_C}\raisetag{2.5em} 
	%
	%
	%
	%
	%
\end{align}%
\vspace{-.65em}
}
}

\futuresketch{
\subsection{TBD: Archive / Old Stuff}

Given a program~$\Pi$ and a set of atoms~$L$, by~$\mathcal{R}(\Pi,L)$
we denote the set of rules of the form $a \leftarrow B^+ \cap L$ such that~$a \leftarrow B^+ \wedge B^-$ belongs to~$\Pi$ and~$a \in L$.

\begin{definition}
We say that a program~$\Pi$ is in \emph{scc-normal form} if every \mbox{non-trivial} SCC component $L$ (that is $|L| \geq 2$) and every rule~$r \in \Pi$ with~$H(r) \in L$ satisfies 
$|B(r) \setminus L | = 1$.
\end{definition}

Note that the scc-normal form is in general not a restriction, but serves to simplify our theoretical investigations.
Indeed, each program can be easily turned in scc-normal form by means of additional auxiliary atoms.
\begin{example}

\end{example}

\begin{theorem}
Let~$\Pi$ be a program, where the treewidth of~$G_\Pi$ is at most~$k$ and where
every SCC component~$L$ satisfies $|L|\leq l$ and~$|\mathcal{R}(\Pi,L)|\leq r$.
Then, there is a program~$\Pi'$ in scc-normal form with treewidth at most~$k+l+r$ such that the stable models of~$\Pi$ and~$\Pi'$ projected to the signature (atoms) of~$\Pi$ coincide.
\end{theorem}

\begin{proof}

\end{proof}

\begin{theorem}
Assume a normal \ASP program~$\Pi$,
where the treewidth of the primal graph~$G_\Pi$ of~$\Pi$ is at most~$k$
and~$\Pi$ contains only one positive cycle~$C=a_1,\ldots,a_k$ of length~$k$.
Then, there is an algorithm for deciding the consistency of~$\Pi$,
running in time~$2^{\mathcal{O}(k)} \cdot (|\Pi|^c)$ for some constant~$c$.
\end{theorem}

\begin{proof}
Consider a tree decomposition~$\mathcal{T}=(T,\chi)$ of~$G_\Pi$.
Let for a node~$t$ of~$T$, $\chi_{\geq t}$ be the union of the bag contents
over~$t$ and all ancestors of~$t$. Formally, $\chi_{\geq t}:= \chi(t) \cup (\bigcup_{t' \text{ is an ancestor of }t\text{ in }T} \chi(t'))$.

In the following, we construct a tree decomposition~$\mathcal{T}'=(T,\chi')$ of~$G_\Pi$, where the bag function~$\chi'$ is as follows,
where we add atoms of the cylce to bags of a node~$t$ that have not yet been forgotten (i.e., atoms that are in~$\chi_{\geq t}$).
For each node~$t$ of~$T$, we let~$\chi'(t)=\chi(t) \cup \{a_i \mid  a_i\in\chi_{\geq t},1\leq i \leq k, \{a_1,\ldots,a_k\}\cap\chi(t)\neq\emptyset\}$.
Observe that the width of~$\mathcal{T}'$ is at most~$2\cdot k$.
Further, our constructed~$\mathcal{T}'$ is still a valid tree decomposition of~$G_\Pi$,
since, in particular connectedness holds.
Concretely, we have that the tree decomposition~$\mathcal{T}$ restricted to any atom in~$C$ is connected,
also~$\mathcal{T}'$ is connected.
Assume towards a contradiction that~$\mathcal{T}'$ is not connected.
Hence, there is a bag~$\chi'(t)$, containing any added atom~$a_i$ of~$C$,
the parent node~$t_1$ of~$t$, where~$a_i\notin\chi'(t_1)$, whose parent node~$t_2$ again contains~$a_i$, i.e., $a_i\in\chi'(t_2)$.
Observe that then, obviously~$a_i\in\chi'(t_2)\setminus\chi(t_2)$, since~$\mathcal{T}$ is connected.
Then, since~$a_i\in\chi_{\geq t_2}$, there is at least one atom~$a_j\in\chi'(t_2)\cap\chi(t_2)$
that is in~$C$. Note that the special case~$a_i=a_j$ is also possible.
We proceed by case distinction.

Case (1)~$a_j\in\chi(t)$: Then, since~$a_j\in\chi(t_2)$, by connectedness of~$\mathcal{T}$ also~$a_j\in\chi(t_1)$, which contradicts by construction of~$\mathcal{T}'$ our assumption
that~$a_i\notin\chi'(t_1)$.

Case (2)~$a_j\notin\chi(t)$: Then, there has to be at least one atom~$a_x$ of~$C$ in~$\chi(t_1)$ that is adjacent to~$a_x$, 
since otherwise either (1) an edge between~$a_j$ and~$a_x$ in~$G_\Pi$ is never covered by~$\mathcal{T}$, 
or (2) connectedness of~$\mathcal{T}$ is destroyed due to~$a_x$.
As a result, by construction of~$\mathcal{T}'$, we also have~$a_i\in\chi'(t_1)$ and arrive at a contradiction.

Then, we can use an algorithm similar to the one for solving the \SAT problem,
where in addition we check for foundedness within each bag.
The resulting algorithm asymptotically requires no more time than the algorithm for solving \SAT,
resulting in runtime~$2^{\mathcal{O}(k)} \cdot (|\Pi|^c)$.
\end{proof}

\begin{theorem}
Assume a normal \ASP program~$\Pi$,
where the treewidth of the primal graph~$G_\Pi$ of~$\Pi$ is at most~$k$
and~$\Pi$ contains only positive cycles of length at most~$\ell\leq\log(k)$.
Then, there is an algorithm for deciding the consistency of~$\Pi$,
running in time~$2^{\mathcal{O}(k\cdot \ell)} \cdot (|\Pi|^c)$ for some constant~$c$.
\end{theorem}
\begin{proof}[Proof (Sketch)]
Consider a tree decomposition~$\mathcal{T}=(T,\chi)$ of~$G_\Pi$.
Similar to above, we construct~$\mathcal{T}'=(T,\chi')$, 
where for each node~$t$ of~$T$, we add to~$\chi'(t)\supseteq\chi(t)$ the atoms of each cycle of length at most~$\ell$ 
if~$\chi(t)$ contains at least one atom of the cycle.
Observe that then in the resulting bags~$\chi'(t)$, it is not possible that added cycle atoms 
that stem from different cycles are involved in a cycle,
since otherwise we could form a cycle of size larger than~$\ell$.
Consequently, we added for each of the at most~$k$ many atoms of each bag of~$\mathcal{T}$,
at most~$\ell\leq\log(k)$ many atoms.
As a result, $|\chi(t)|\leq k\cdot\ell\leq k\cdot\log(k)$.
Then, we can use an algorithm similar to the one for solving the \SAT problem, as discussed above.
The resulting algorithm asymptotically requires no more time than the algorithm for solving \SAT,
resulting in runtime~$2^{\mathcal{O}(k\cdot \ell)} \cdot (|\Pi|^c)$.
\end{proof}}

\section{Conclusion and Future Work}\label{sec:discussion}
This paper deals with improving existing algorithms for deciding 
consistency of head-cycle-free (HCF) ASP for bounded treewidth.
The existing lower bound implies that under the exponential time hypothesis (ETH),
we cannot expect to solve a given HCF program
with~$n$ atoms and treewidth~$k$ in time~$2^{o(k\cdot\log(k))}\cdot\poly(n)$.

In this work, in addition to the treewidth, we also consider the size~$\ell$ of the largest strongly-connected component of the positive dependency graph.
Considering both parameters, we obtain a more precise characterization of the runtime: of~$2^{\mathcal{O}(k\cdot\log(\lambda)}\cdot\poly(n)$, where~$\lambda=\min(\{k,\ell\})$.
This improves the previous result when the strongly-connected components are smaller than the treewidth.
Further, we provide a treewidth-aware reduction from HCF ASP to tight ASP,
where the treewidth increases from~$k$ to~$\mathcal{O}(k\cdot\log(\ell))$.
Finally, we show that under ETH, tight ASP has roughly the same complexity lower bounds
as SAT, which implies that there cannot be a reduction from HCF ASP to tight ASP
such that the treewidth only increases from~$k$ to~$o(k\cdot\log(k))$.

Currently, we are performing experiments and practical analysis of our provided reductions. For future work we suggest to investigate precise lower bounds by considering extensions of ETH like the strong ETH~\cite{ImpagliazzoPaturi01}.
It might be also interesting to establish lower bounds by taking both parameters~$k$ and~$\ell$ into account.
%

\bibliography{include/bibliography/krr,references,include/bibliography/procs}

\end{document}